\documentclass[prx,superscriptaddress,aps, twocolumn,
10pt,longbibliography, floatfix]{revtex4-2}

\usepackage{amsmath,amssymb,amsthm,mathrsfs,amsfonts,dsfont,mathtools,physics}
\usepackage{braket}

\usepackage{ragged2e}
\usepackage{graphicx}

\usepackage{algorithm}       
\usepackage{algpseudocode}

\usepackage{hyperref}
\usepackage[capitalize]{cleveref}

\usepackage{float}

\usepackage{microtype}

\usepackage[caption=false]{subfig}
\usepackage{soul}
\usepackage{adjustbox}

\usepackage{bm}
\usepackage{enumerate}
\usepackage{color}

\usepackage{appendix}

\usepackage{enumitem}
\usepackage[normalem]{ulem}
\usepackage{comment}
\usepackage{xcolor}

\usepackage{amsthm}

\newtheorem{remark}{Remark}
\newtheorem{definition}{Definition}

\newtheorem{theorem}{Theorem}
\newtheorem{lemma}{Lemma}
\newtheorem{proposition}{Proposition}
\newtheorem{corollary}{Corollary}

\Crefname{theorem}{Theorem}{Theorems}
\theoremstyle{remark}

\newcommand{\qmaddress}{\affiliation{Quantum Motion, 9 Sterling Way, London N7 9HJ, United Kingdom}}
\newcommand{\oxaddress}{\affiliation{Mathematical Institute, University of Oxford, Woodstock Road, Oxford OX2 6GG, United Kingdom}}
\newcommand{\mrnaaddress}{\affiliation{Moderna, Cambridge, MA 02139, USA}}

\begin{document}

\title{Nearly Optimal Amplitude Estimation at any Depth}

\author{Jona Erle}
\email{jona.erle@balliol.ox.ac.uk}
\oxaddress
\qmaddress
\mrnaaddress

\author{B\'alint Koczor}
\email{balint.koczor@maths.ox.ac.uk}
\oxaddress
\qmaddress

\begin{abstract}
We develop a class of amplitude estimation algorithms with tunable circuit depth $M$ and circuit repetitions $N$, requiring neither ancilla qubits nor controlled Grover operations. For additive error $\epsilon$ in the Grover angle, they attain the nearly optimal query--depth tradeoff $M^2N\in\tilde{\mathcal{O}}(\epsilon^{-2})$ uniformly over $\lambda\in[0,\pi/2]$, spanning the full range from classical sampling at $M=1$ to the Heisenberg limit at $M=\Theta(\epsilon^{-1})$. While prior depth-tunable work establishes comparable angle-accuracy guarantees only away from the boundaries $\lambda\to0$ or $\pi/2$ or at discrete depth tradeoffs, our guarantee extends to both boundaries, so the quantum speedup persists there rather than degrading to classical sampling. Numerical experiments confirm the predicted uniform angle accuracy and show low overhead in practice, making them strong candidates for practical amplitude estimation in the early fault-tolerant regime, with applications such as overlap certification, trial-state verification, and Monte Carlo methods. 
\end{abstract}

\maketitle

\emph{Introduction.} 
Amplitude estimation (AE) is a ubiquitous quantum algorithmic primitive and a fundamental subroutine of many fault-tolerant quantum algorithms. Given a unitary $U$ preparing the state $\ket{\psi}=U\ket{0}$ and an orthogonal projector $P$ onto a ``good'' subspace, the task is to estimate the amplitude $a:= \bra{\psi}P\ket{\psi}\in[0,1]$. In practice, this is often done by estimating the angle $\lambda\in [0,\pi/2]$ defined through $a=\sin^2(\lambda)$. Throughout, we quantify error in the Grover angle: $\epsilon$ denotes a target additive error in $\lambda$. An estimate $\hat{\theta}$ with $|\hat{\theta}-\lambda|\leq \epsilon$ yields $\hat{a}=\sin^2(\hat{\theta})$ satisfying
\begin{equation*}
    |\hat{a}-a|\leq 2\sqrt{a(1-a)}\epsilon+\epsilon^2.
\end{equation*}
Uniform accuracy in $\lambda$ is strictly stronger than uniform additive accuracy in $a$, but is also the natural benchmark for a quantum speedup---classical sampling achieves $\mathcal{O}(N^{-1/2})$ accuracy in $\lambda$ uniformly over $[0,\pi/2]$. Thus, retaining a quantum speedup at the boundaries naturally calls for a uniform accuracy guarantee in $\lambda$.
Classical sampling requires $\Theta(\epsilon^{-2})$ queries to $U$ to estimate $\lambda$ uniformly to accuracy $\epsilon$, whereas AE requires only $\Theta(\epsilon^{-1})$ queries to $U$ and its inverse $U^\dagger$, yielding a quadratic improvement that attains the Heisenberg limit and is optimal up to constant factors \cite{nayak1999quantum, brassard2000quantum}. This scaling underpins quadratic quantum speedups in Monte Carlo methods \cite{montanaro2015quantum} and algorithms with applications in areas such as finance \cite{rebentrost2018quantum, woerner2019quantum, stamatopoulos2020option,  chakrabarti2021threshold}, quantum chemistry \cite{knill2007optimal, huang2026fullqubit}, and optimization \cite{sidford2023quantum, van2020quantum}.

Textbook AE~\cite{brassard2000quantum} uses controlled powers of the Grover operator, an $\mathcal{O}(\log\epsilon^{-1})$-qubit ancilla register, and an inverse quantum Fourier transform (QFT). All three ingredients---the coherent control, the ancilla overhead, and the QFT---are prohibitive in the early fault-tolerant (EFT) regime \cite{katabarwa2024early,zimboras2025myths}, in which the number of logical qubits and the achievable coherent circuit depth are severely constrained. Algorithms targeting the EFT regime therefore trade coherent operations for circuit repetitions and classical post-processing \cite{lin2022heisenberg, wan2022randomized, dong2022ground, ding2023even}. Consequently, considerable effort has gone into developing AE algorithms that remove both the QFT and the controlled Grover operations, replacing them with classical inference on measurement outcomes of Grover circuits of varying depth \cite{aaronson2020quantum, suzuki2020amplitude, nakaji2020faster, grinko2021iterative, labib2024quantum}.

These simplifications alone, however, do not render AE feasible on EFT hardware as a maximal circuit depth $M \in \Theta(\epsilon^{-1})$ is still required. Recent work has therefore developed low-depth AE
algorithms~\cite{giurgica2022low, rall2023amplitude, vu2025low, huang2026low, sun2026quantum} which trade maximal circuit depth $M$ for an increased number of circuit repetitions $N$, thereby operating between the standard quantum limit ($M = \Theta(1)$, $N = \Theta(\epsilon^{-2})$) and the Heisenberg limit ($M = \Theta(\epsilon^{-1})$, $N = \tilde{\mathcal{O}}(1)$). Any AE algorithm attaining accuracy $\epsilon$ must obey the Zalka--Burchard lower bound as  $M^2N\in \Omega(\epsilon^{-2})$~\cite{zalka1999grover, burchard2019lower}, and while several match it up to polylogarithmic factors for $\lambda$ bounded away from $0$ and $\pi/2$, none does so uniformly over $\lambda\in[0,\pi/2]$. For example, Power-law AE, introduced by Giurgica-Tiron \textit{et al.}~\cite{giurgica2022low}, relies on the regularity conditions of the Bernstein--von Mises theorem, which fail at the boundary, i.e. at $\lambda\to0,\pi/2$. Rall and Fuller's hybrid quantum-classical amplitude estimation \cite{rall2023amplitude} requires amplitude-dependent circuit depths and query overheads, both of which diverge as $\lambda\to 0$. Huang and Koczor's Gaussian Least Squares Amplitude Estimation (GLSAE) \cite{huang2026low} falls back to classical sampling at the boundary. We refer the reader to \cref{app:related_work} for a more detailed discussion of prior work.

In this work, we propose Windowed Least Squares Amplitude Estimation (WLSAE), a family of algorithms that sample Grover circuit depths from a discrete probability distribution, which in the following we refer to as a window, and estimate $\lambda$ through a simple one-dimensional least-squares fit. WLSAE generalizes GLSAE from Gaussian windows to an explicitly characterized class of admissible windows, and we prove that every admissible window achieves the query-depth tradeoff $M^2N\in \tilde{\mathcal{O}}(\epsilon^{-2})$ uniformly for every $\lambda\in [0,\pi/2]$, thereby matching the Zalka--Burchard bound up to polylogarithmic factors across the full interpolation from classical sampling to the Heisenberg limit. Since GLSAE is the Gaussian member of this class, it follows in particular that the boundary patch of~\cite{huang2026low} is unnecessary, and, in the regime where it is invoked, forgoes the quantum speedup entirely. Having a family of windows rather than a single window further allows us to optimize window design for practical implementations. We numerically compare admissible windows against each other, and show that a simple uniform window already outperforms the Gaussian window of GLSAE.

\emph{Measurement Model.} As before, let $U$ be a unitary preparing the state $\ket{\psi}=U\ket{0}$, let $P$ be an orthogonal projector, and let 
\begin{equation*}
    G:=-U(I-2\ket{0}\bra 0)U^\dagger (I-2P)
\end{equation*}
denote the Grover operator. Let $a:=\bra{\psi}P\ket{\psi}$ and $\lambda\in[0,\pi/2]$ be defined through $a=\sin^2(\lambda)$. Ref.~\cite{huang2026low} showed that for every $m\in \mathbb{Z}$ there exists an ancilla-free circuit that makes $|m|$ queries to $U$ and $U^\dagger$ and returns an outcome $Z_m\in\{-1,1\}$ with 
\begin{equation}
    \label{eq:signal}
    \mathbb{E}[Z_{m} \mid m] = \cos(2 m \lambda),\quad \mathrm{Var}[Z_m\mid m]=\sin^2(2m\lambda).
\end{equation}
In particular, for even $|m|=2t$, we prepare the state $(I-2P)G^{t-1}\ket{\psi}$ and measure the shifted Loschmidt echo $2\ket{\psi}\bra{\psi}-I$, whereas for odd $|m|=2t+1$, we prepare the state $G^t\ket{\psi}$ and measure $I-2P$. For $m=0$ no circuit is executed and \cref{eq:signal} gives $Z_0=1$ deterministically. Notably, this measurement model requires neither controlled Grover operations nor ancilla qubits. 

\emph{Windowed Least Squares Amplitude Estimation.}
A window of maximal depth $M$ is a discrete probability distribution supported on $W_M:=\{-M,\dots,M\}$ (see \cref{fig:1} for examples). Given a window $p$, WLSAE draws $N$ i.i.d.\ depths $m_1,\dots,m_N\sim p$, executes the corresponding circuits of \cref{eq:signal}, and records the outcomes $Z_{1}, \dots, Z_{N} \in \{\pm 1\}$. Since depth $|m|$ costs $|m|$ queries, the total query count is $Q=N\mathbb{E}_{m\sim p}[|m|]\leq NM$ in expectation. The estimate $\hat{\theta}$ of $\lambda$ is then obtained by minimizing the mean-square loss function
\begin{equation}
    \label{eq:loss}
    L(\theta):=\frac{1}{N}\sum_{i=1}^N\left(Z_{i}-\cos(2m_i\theta)\right)^2
\end{equation}
over the uniform grid 
\begin{equation}
    G_K:=\{(j+\frac{1}{2})\frac{\pi}{2K}:0\leq j< K\}\subset [0,\pi/2],
\end{equation} 
where the grid size $K\in\mathbb{N}_{\geq 1}$ is an input to the algorithm. The corresponding amplitude estimate is $\hat{a}=\sin^2(\hat{\theta})$. Taking $p$ to be a Gaussian window recovers GLSAE \cite{huang2026low}. The full procedure is given in \cref{alg:ldae}. 
\begin{figure}[htbp]
    \begin{minipage}{1\linewidth}
        \begin{algorithm}[H]
  \caption{WLSAE}
  \label{alg:ldae}
  \begin{algorithmic}[1]
    \Require Unitary $U$, projector $P$, available depth $M$, number of shots $N$, window $p$ on $W_M$, grid size $K$
    \State $G\gets -U(I-2\ket{0}\bra{0})U^\dagger(I-2P)$, $\ket{\psi}\gets U\ket{0}$
    \For{$i = 1, \dots, N$}
    \State $m_i\sim p$ \Comment{Sample depth from window}
    \If{$m_i=0$}
    \State $Z_i\gets 1$\Comment{no circuit executed}
    \ElsIf{$|m_i| = 2t$, $t \in \mathbb{N}$}
    \State $Z_i\gets$ measure $2\ket{\psi}\bra{\psi}-I$ on $(I-2P)G^{t-1}\ket{\psi}$
    \ElsIf{$|m_i| = 2t+1$, $t \in \mathbb{Z}_{\geq 0}$}
    \State $Z_i\gets$ measure $I-2P$ on $G^{t}\ket{\psi}$
    \EndIf
    \EndFor
    \State $G_K\gets \{(j+\frac{1}{2})\frac{\pi}{2K}:0\leq j< K\}$
    \State $L_{\min}\gets \infty$, $\hat{\theta}\gets 0$
    \ForAll{$\theta \in G_K$}
    \State $L(\theta)\gets\frac{1}{N}\sum_{i=1}^{N}\left(Z_{i}-\cos(2m_i\theta) \right)^2$
    \If{$L(\theta)<L_\mathrm{min}$}
    \State $L_\mathrm{min}\gets L(\theta)$, $\hat{\theta}\gets \theta$
    \EndIf
    \EndFor
    \State $\hat{a}\gets \sin^2(\hat{\theta})$
    \State \Return $\hat{\theta}$, $\hat{a}$
  \end{algorithmic}
\end{algorithm}
    \end{minipage}

\end{figure}

\begin{figure*}[htbp]
  \includegraphics{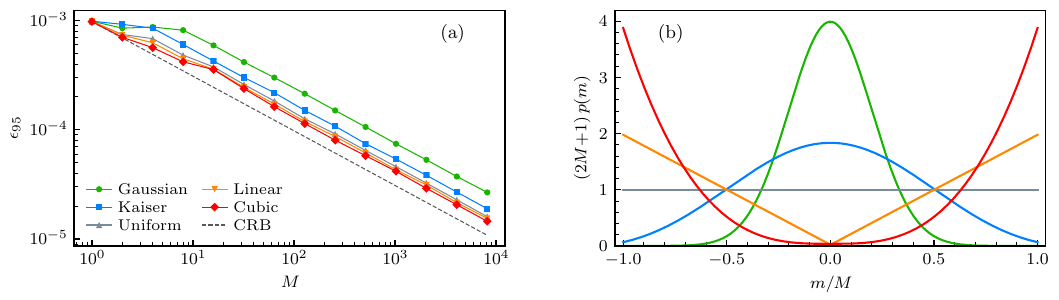}
  \vspace{-2em}
  \caption{WLSAE for five window functions at fixed query budget $Q=10^6$ and fixed $\lambda=0.5$, over $10^4$ independent runs with grid size $K=\lceil 100\sqrt{\mathbb{E}[m^2]N}\rceil $. (a) $\epsilon_{95}$ against maximal circuit depth $M$, where the dashed line is the Cram\'er--Rao bound (CRB). (b) The five windows plotted as $(2M+1)p(m)$ against $m/M$ for $M=10^3$. The (unnormalized) windows are defined as:  
  Gaussian $\exp(-m^2/(2T^2))$ with $T=\max(1,M/5)$; Kaiser $I_0(\beta\sqrt{1-(m/M)^2})$ with $\beta=5$; uniform $1$; linear $f+|m|/M$, and cubic $f+(|m|/M)^3$ with $f=0.01$.
  }
  \label{fig:1}
\end{figure*}

\emph{Main Results.} \Cref{alg:ldae} is well defined for any window, but the choice of the window determines whether it achieves the nearly optimal query-depth tradeoff. Two extremes illustrate what is required. A window supported on a constant small depth, e.g., $p(1)=1$, yields a well-behaved estimator, but its precision scales as $\mathcal{O}(N^{-1/2})$ irrespective of $M$, so no quantum speedup is obtained. A window supported on a single large depth, say $p(M)=1$, produces a loss that is sharply curved at $\theta=\lambda$ but has $\Theta(M)$ minima on $[0,\pi/2]$, so that $\lambda$ cannot be uniquely identified. An appropriate window must therefore place enough mass at large depths to resolve $\lambda$ with high precision, and enough mass across the remaining depths to distinguish the true minimum from its aliases. In \cref{def:admissible} we introduce a class of windows that provably achieves both, and therefore attains the nearly optimal query-depth tradeoff.
\begin{definition}[Admissible window]
    \label{def:admissible}
    Let $M\in\mathbb{N}_{\geq 1}$, let $\sigma\geq 1$ and $b\in(0,1]$ be constants independent of $M$, and set $T=\max(1, M/\sigma)$. A window $p$ on $W_M$ is called $(b,\sigma)$-admissible if 
    \begin{equation}
        \label{eq:admissible_cond}
        p(m)+p(-m)\geq \frac{b}{T},
    \end{equation}
    for all integers $m$ with $1\leq m \leq T$.
\end{definition}
\Cref{eq:admissible_cond} requires the window $p$ to place mass at least $b/T$ on every depth up to $T$. Hence, a constant fraction of its total mass is below $T$ and $T$ can be seen as the window's effective width, while $\sigma$ bounds how far beyond $T$ the window's tail may reach. The truncation at $T\geq 1$ simply keeps \cref{eq:admissible_cond} satisfiable in the regime where $M<\sigma$.

For instance, the uniform window on $W_M$ is $(\frac{2}{3}, 1)$-admissible while the truncated Gaussian of width $T$ used by GLSAE \cite{huang2026low} is $(0.48, 5)$-admissible. Notably, $(b, \sigma)$-admissibility is sufficient, but not necessary, for the following guarantees to hold.
\begin{theorem}
    \label{theorem:main}
    Let $p$ be a $(b,\sigma)$-admissible window on $W_M$, $\epsilon\in (0,1/M]$, and $\delta\in(0,1)$. There exists a constant $A\geq1$, depending only on $b$ and $\sigma$, and a choice of $N$ with $M^2N\in{\mathcal{O}}\left(\epsilon^{-2}\log(\epsilon^{-1}\delta^{-1})\right)$, where the implied constant depends only on $b$ and $\sigma$, such that, upon choosing $K=\lceil \frac{A\pi}{2\epsilon}\rceil$, \cref{alg:ldae} returns, for every $\lambda\in[0,\pi/2]$, with probability at least $1-\delta$, an estimate $\hat{\theta}$ of $\lambda$ satisfying $|\hat{\theta}-\lambda|\leq \epsilon$. Consequently, $\hat{a}=\sin^2(\hat{\theta})$ satisfies \begin{equation}
        \label{eq:error_prop}
        |\hat{a}-a|\leq 2\sqrt{a(1-a)}\epsilon +\epsilon^2.
    \end{equation}
\end{theorem}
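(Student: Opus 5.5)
The plan is to analyze the population loss and then control its fluctuations uniformly over the finite grid $G_K$. Write $\mathbb{E}[L(\theta)] = \mathbb{E}_m[\mathrm{Var}(Z_m)] + \ell(\theta)$ with
\begin{equation*}
\ell(\theta) := \mathbb{E}_{m\sim p}\left[(\cos(2m\lambda)-\cos(2m\theta))^2\right] = 4\,\mathbb{E}_{m}\left[\sin^2(m(\theta-\lambda))\sin^2(m(\theta+\lambda))\right].
\end{equation*}
The first term does not depend on $\theta$, so the estimator compares only the excess loss $L(\theta)-L(\lambda')$ against a reference grid point $\lambda'\in G_K$ with $|\lambda'-\lambda|\le \pi/(4K)\le \epsilon/(2A)$. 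The central deterministic step is a \emph{separation lemma}: for all $\theta\in[0,\pi/2]$ with $|\theta-\lambda|\ge\epsilon$,
\begin{equation*}
\ell(\theta)\ \ge\ c\,\min\!\left(1, M^2(\theta-\lambda)^2\right)\cdot \min\!\left(1, M^2\mathrm{dist}(\theta+\lambda,\{0,\pi\})^2\right)\ \text{(roughly)},
\end{equation*}
with $c>0$ depending only on $(b,\sigma)$. This is proved using admissibility: restricting to $1\le m\le T$, each depth carries mass $\ge b/T$. An averaging lemma of the form $\frac1T\sum_{m\le T}\sin^2(mx)\sin^2(my)\gtrsim \min(1,T^2x^2)\min(1,T^2y^2)$, obtained via product-to-sum identities and Fejér-kernel bounds, then gives the claim, since $T\ge M/\sigma$.

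The boundary is the main obstacle. When $\lambda\to 0$, the second factor is small for all $\theta$ near $\lambda$, so the curvature $\ell''(\lambda)\sim \mathbb{E}[m^2\sin^2(2m\lambda)]$ degenerates. The compensation is that the noise also vanishes there, because $\mathrm{Var}(Z_m)=\sin^2(2m\lambda)$. I would therefore run a variance-sensitive concentration rather than a Hoeffding bound. Write
\begin{equation*}
L(\theta)-L(\lambda') = \frac1N\sum_i\Big[(c_i(\lambda')-c_i(\theta))(2Z_i - c_i(\theta)-c_i(\lambda'))\Big],
\end{equation*}
where $c_i(\theta)=\cos(2m_i\theta)$. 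Each summand is bounded by $4$. Its variance is at most $\mathbb{E}[(c(\lambda')-c(\theta))^2\cdot O(\sin^2(2m\lambda) + (c(\lambda)-c(\theta))^2+ (c(\lambda)-c(\lambda'))^2)]$, and the key claim is that this is $O(\ell(\theta)+\ell(\lambda'))$. This uses $\sin^2(2m\lambda)$ together with the product structure: near the boundary both $c(\lambda')-c(\theta)$ and the noise scale like the same small factors. Bernstein's inequality then makes the event $L(\theta)\le L(\lambda')$ have probability $\exp(-\Omega(N\,\ell(\theta)))$, provided $\ell(\lambda')\le \ell(\theta)/2$. That proviso is exactly where $A$ enters: choosing $A$ large, depending on $(b,\sigma)$, makes $\ell(\lambda')\le \mathbb{E}[4m^2](\lambda'-\lambda)^2\cdot(\text{same boundary factor})$ small relative to the lower bound at distance $\epsilon$.

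To finish, note that $\ell(\theta)\gtrsim M^2\epsilon^2\cdot(\text{boundary factor})$ for every bad grid point. Where the boundary factor is small, a gap of relative size $1/2$ still holds because both sides share it. Combined with the Bernstein tail, whose exponent is ratio-driven once the variance is proportional to $\ell$, we need $N\gtrsim (M\epsilon)^{-2}\log(K/\delta)$. A union bound over the $K=O(1/\epsilon)$ grid points gives failure probability at most $\delta$ with $M^2N=O(\epsilon^{-2}\log(\epsilon^{-1}\delta^{-1}))$. The step I expect to be delicate is making the boundary factor cancel cleanly between mean gap and variance. If the product lower bound fails for $\theta$ far from $\lambda$ but with $\theta+\lambda$ near $0$ or $\pi$, I would handle those aliases separately by a direct Hoeffding argument, since there $\ell(\theta)$ is of order $1$ with high admissible mass. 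Finally, \cref{eq:error_prop} follows from the mean value theorem: $|\sin^2\hat\theta-\sin^2\lambda|\le|\sin 2\lambda|\epsilon+\epsilon^2$.
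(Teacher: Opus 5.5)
Your overall architecture is reasonable. Your product lower bound is equivalent, up to constants, to the paper's envelope $C_2\min(1,T^2\Delta^2,T^4\Delta^2v^2)$, with $\Delta=|\theta-\lambda|$, $v=\max(\bar\lambda,\Delta)$ and $\bar\lambda=\min(\lambda,\pi/2-\lambda)$. Comparing each bad grid point against the nearest grid point $\lambda'$ is a fine substitute for the paper's localization lemma.

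The gap is in the concentration step, and it is exactly the boundary difficulty the theorem addresses. Your variance bound $\mathrm{Var}(Y)=O(\ell(\theta)+\ell(\lambda'))$ holds trivially, because $|2Z-c(\theta)-c(\lambda')|\le 4$; it never uses $\mathrm{Var}(Z_m)=\sin^2(2m\lambda)$. With a range of order one, Bernstein then gives exactly what you wrote, a tail $\exp(-\Omega(N\ell(\theta)))$, which is no better than Hoeffding.

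The boundary factor does not cancel in this exponent. The \emph{relative} gap $\ell(\lambda')\le\ell(\theta)/2$ is $\lambda$-independent, but the \emph{absolute} exponent $N\ell(\theta)$ is not. For $\bar\lambda\lesssim\epsilon$ and $|\theta-\lambda|=\epsilon$ one has $\ell(\theta)\asymp(M\epsilon)^4$. Your argument therefore needs $N\gtrsim(M\epsilon)^{-4}\log(K/\delta)$, i.e.\ $M^2N\gtrsim\epsilon^{-2}(M\epsilon)^{-2}\log(K/\delta)$. At $M=1$ this is $N\sim\epsilon^{-4}$, worse than classical sampling. Your concluding step ``$N\gtrsim(M\epsilon)^{-2}\log(K/\delta)$'' silently drops the boundary factor. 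What you have actually proved is the degraded $N^{-1/4}$ boundary rate that the paper explicitly warns about.

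The missing idea is that the variance must carry the boundary factor twice, while the range carries it once. Concretely, for $\Delta\le 1/T$ and $\Psi=\min(1,T\Delta,T^2\Delta v)$ you need two bounds.

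(i) $\mathrm{Var}(Y)\lesssim\Psi^4/(T\Delta)^2\asymp\ell(\theta)^2/(T\Delta)^2$. You get this by splitting with the law of total variance into two parts. The shot-noise part $4\,\mathbb{E}_m[(c(\lambda')-c(\theta))^2\sin^2(2m\lambda)]$ is bounded using $\sin^2(2m\lambda)\le\min(1,4m^2\bar\lambda^2)$ and moments up to $\mathbb{E}[m^6]$. The depth-sampling part $\mathrm{Var}_m(\mathbb{E}[Y\mid m])$ is quartic in the small quantities and uses $\mathbb{E}[m^8]$.

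(ii) $|Y|\lesssim\sigma^2\Psi$ rather than $O(1)$, using that $p$ is supported on $|m|\le M\le\sigma T$. This is not optional: with a range of order one, the range term alone caps the Bernstein exponent at $O(N\ell(\theta))$ even with the exact variance. This happens, for example, at $\lambda=0$, where the shot noise vanishes entirely.

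With (i) and (ii), mean$^2/$variance $\gtrsim T^2\Delta^2$ and mean$/$range $\gtrsim\Psi\ge T^2\Delta^2$. The exponent is then $\gtrsim NT^2\epsilon^2$ uniformly in $\lambda$, which is what the theorem requires. This is the content of Lemma~\ref{lemma:concentration} and Remark~\ref{remark:variance}, and it is the heart of the proof rather than a technicality.

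A minor point: the plain mean value theorem only gives $|\sin 2\lambda|\epsilon+2\epsilon^2$. Use the identity $\sin^2(\lambda+e)-\sin^2\lambda=\sin e\,\sin(2\lambda+e)$ to obtain the stated $\epsilon^2$ term.
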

In particular, at the endpoints $a= 0$ or $1$, \cref{eq:error_prop} reduces to $|\hat{a}-a|\leq \epsilon^2$. Thus, \cref{theorem:main} provides strictly more information than a uniform additive-$\epsilon$ guarantee in $a$.
The following corollary follows as an immediate consequence.

\begin{corollary}
    \label{cor:interpolation}
    Let $\beta\in[0,1]$, $\epsilon\in(0,1]$, and $\delta\in(0,1)$. There exists an algorithm that outputs, with probability at least $1-\delta$, an estimate $\hat{a}$ of $a$ satisfying \cref{eq:error_prop} for every $a\in[0,1]$, using ${\mathcal{O}}(\epsilon^{-1-\beta}\log(\epsilon^{-1}\delta^{-1}))$ queries to $U$ and $U^\dagger$ at maximum circuit depth $M\in\mathcal{O}(\epsilon^{-1+\beta})$, without ancilla qubits or controlled Grover operations.
\end{corollary}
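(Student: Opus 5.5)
The plan is to instantiate \cref{theorem:main} with a suitable window and depth, then count queries. Fix $\beta\in[0,1]$, $\epsilon\in(0,1]$ and $\delta\in(0,1)$. Take the uniform window on $W_M$, which is $(\frac23,1)$-admissible; any fixed admissible window would do, since the constants in \cref{theorem:main} depend only on $(b,\sigma)$. Choose the depth
\[
M:=\max\!\left(1,\left\lfloor \epsilon^{-1+\beta}\right\rfloor\right),
\]
so that $M\in\mathcal{O}(\epsilon^{-1+\beta})$.

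The hypothesis $\epsilon\le 1/M$ of \cref{theorem:main} must be checked. Since $\epsilon\le 1$ and $\beta\ge 0$, we have $\epsilon^{-1+\beta}\le\epsilon^{-1}$. Hence $M\le \epsilon^{-1}$: when $M=1$ this follows from $\epsilon\le1$, and otherwise from $M\le\epsilon^{-1+\beta}\le\epsilon^{-1}$. So \cref{theorem:main} applies with grid size $K=\lceil A\pi/(2\epsilon)\rceil$. It supplies $N$ with $M^2N\le C\,\epsilon^{-2}\log(\epsilon^{-1}\delta^{-1})$, where $C$ is an absolute constant for our fixed window. With probability at least $1-\delta$ it returns $\hat\theta$ with $|\hat\theta-\lambda|\le\epsilon$ for every $\lambda\in[0,\pi/2]$, equivalently for every $a\in[0,1]$, and \cref{eq:error_prop} follows for $\hat a=\sin^2\hat\theta$.

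For the query count, each depth costs at most $M$ queries, so $Q\le NM=(M^2N)/M\le C\epsilon^{-2}\log(\epsilon^{-1}\delta^{-1})/M$. This bound holds deterministically, not only in expectation. It remains to show $1/M\in\mathcal{O}(\epsilon^{1-\beta})$.
\begin{itemize}
\item If $\epsilon^{-1+\beta}\ge1$, then $\lfloor x\rfloor\ge x/2$ for $x\ge1$ gives $M\ge \tfrac12\epsilon^{-1+\beta}$.
\item This covers every case: $\epsilon\le1$ and $\beta\le1$ force $\epsilon^{-1+\beta}\ge1$, and $M=1$ occurs only when $\epsilon^{-1+\beta}<2$, where the same bound also holds.
\end{itemize}
Therefore $Q\le 2C\,\epsilon^{-1-\beta}\log(\epsilon^{-1}\delta^{-1})$.

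The circuits of \cref{eq:signal} use no ancillas and no controlled Grover operations, which completes the argument. The only real obstacle is the rounding of $M$ to an integer while keeping $\epsilon\le1/M$; this is handled by the floor together with the case $M=1$ above.
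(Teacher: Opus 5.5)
Your proposal is correct and follows essentially the same route as the paper: the uniform $(\tfrac23,1)$-admissible window with $M=\lfloor\epsilon^{-1+\beta}\rfloor$, the check $M\le\epsilon^{-1}$ so that \cref{theorem:main} applies, and the query bound $MN=(M^2N)/M$ combined with $\lfloor x\rfloor\ge x/2$ for $x\ge 1$. The only difference is cosmetic: your $\max(1,\cdot)$ safeguard is redundant, because $\epsilon^{-1+\beta}\ge 1$ already guarantees $M\ge 1$, as you note yourself.
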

\Cref{cor:interpolation} follows the convention of \cite{huang2026low}, such that we attain the Heisenberg limit at $\beta=0$ and recover classical sampling at $\beta=1$. In \cref{theorem:main}, and hence at every $\beta$ in \cref{cor:interpolation}, the query--depth product satisfies $M^2N\in\mathcal{O}(\epsilon^{-2}\log(\epsilon^{-1}\delta^{-1}))$, matching the Zalka--Burchard bound $M^2N\in\Omega(\epsilon^{-2})$ \cite{zalka1999grover, burchard2019lower} up to a single logarithmic factor. Hence WLSAE, applied to admissible windows, attains the nearly-optimal query--depth tradeoff and, unlike previous low-depth algorithms, does so for all $\beta\in[0,1]$ and all $a\in[0,1]$.

While full proofs are deferred to \cref{app:proof}, we briefly explain why \cref{alg:ldae} succeeds across all amplitudes, even though---as observed in~\cite{huang2026low}---the loss flattens at the boundary. 
In particular, \Cref{cor:envelope} states
\begin{equation}
    C_2\Psi^2\leq \mathbb{E}\left[L(\theta) - L(\lambda)\right]\leq C_1\Psi^2,
\end{equation}
where $C_1,C_2\in O(1)$, $\Psi:=\min (1, T\Delta, T^2\Delta\max(\bar{\lambda}, \Delta))$
with $\Delta := |\theta - \lambda|$,
$\bar{\lambda} := \min(\lambda, \frac{\pi}{2} - \lambda)$, and, by \cref{def:admissible}, $T=\max(1,M/\sigma)$. Consequently, around the minimum at $\lambda$, i.e. whenever $\Delta \leq 1/T$, the loss grows quadratically in $\Delta$ away from the boundary, whereas as $\bar\lambda\to 0$ it grows only quartically, so that a fixed deviation in $L$ certifies only a much larger deviation in $\theta$. The noise, however, vanishes at the same rate. Let $L(\theta)-L(\lambda)=\frac{1}{N}\sum_{i=1}^NY_i$ with $Y_{i} := (Z_{i} - \cos(2m_{i}\theta))^{2} - (Z_{i} - \cos(2m_{i}\lambda))^{2}$. From \cref{remark:variance}, it follows that, for $\Delta\leq 1/T$,
\begin{equation}
    \mathrm{Var}[Y_i]\leq C_6\frac{\Psi^4}{T^2\Delta^2}.
\end{equation}
Therefore, for $\Delta\leq 1/T$, we find
\begin{equation}
    \label{eq:snr}
     \frac{\mathbb{E}\left[L(\theta) - L(\lambda)\right]}{\sqrt{\mathrm{Var}[Y_{i}]/N}} \geq \frac{C_{2}}{\sqrt{C_{6}}}\, T\Delta\sqrt{N},
\end{equation}
in which $\Psi$---and with it the entire $\lambda$-dependence---has canceled, giving a $\lambda$-independent lower bound on the signal-to-noise ratio. This cancellation is invisible to concentration inequalities that bound deviations by the range of their summands rather than by their variance, because the range does not vanish fast enough: By \cref{remark:variance}, $\max Y_{i}-\min Y_i \leq C_{7}\Psi$ exceeds the standard deviation by a factor $\max(1, 1/(T\max(\bar{\lambda},\Delta)))$. A Hoeffding-type argument therefore requires $\Psi \gtrsim \sqrt{r}$ rather than $T\Delta \gtrsim \sqrt{r}$, where $r = \log(2K/\delta)/N$. The two agree whenever $T\bar{\lambda} \gtrsim 1$---precisely the regime in which the guarantee of \cite{huang2026low} is established---but at the boundary the former yields only $\Delta \gtrsim r^{1/4}/T$, degrading the rate in $N$ from $N^{-1/2}$ to $N^{-1/4}$. 

The loss depends on the sampled data only through the per-depth counts $n_m=|\{i:|m_i|=m\}|$ and outcome sums $S_m=\sum_{i:|m_i|=m}Z_i$. We can write
\begin{equation}
    \label{eq:loss_statistics}
    L(\theta)=\frac{3}{2}+\frac{1}{N}\sum_{m=0}^M\left[\frac{n_m}{2}\cos(4k\theta)-2S_k\cos(2k\theta) \right],
\end{equation}
which is a discrete cosine polynomial and whose values on $G_K$ are returned by a single discrete cosine transform. Therefore, the classical cost of \cref{alg:ldae} is $\mathcal{O}(N+K\log K)$ compared to the naive $\mathcal{O}(NK)$. 

\emph{Numerics.}
\Cref{theorem:main} guarantees that WLSAE achieves a nearly-optimal query--depth tradeoff for every admissible window. However, in practice, constant overheads are equally important and---especially in the early fault-tolerant regime---may decide whether an algorithm is feasible or not. We therefore evaluate WLSAE numerically, comparing window functions and probing the amplitude-uniformity that \cref{theorem:main} promises. Throughout, $Q=\mathbb{E}_{m\sim p}[|m|]N\leq MN$ denotes the expected number of queries made to $U$ and $U^\dagger$, and we report $\epsilon_{95}$, the $95$th percentile of $|\hat{\theta}-\lambda|$ over independent runs, i.e. the accuracy $\epsilon$ certified at confidence $1-\delta=0.95$. 

\Cref{fig:1} compares five windows at fixed query budget $Q=10^6$ and shows that tapering towards large $|m|$ is counterproductive: at $M=1024$ the uniform window attains an error approximately $40\%$ below that of the Gaussian window of GLSAE \cite{huang2026low}, and $17\%$ below that of the Kaiser window, while the anti-tapered linear and cubic windows improve on the uniform window by a further $5\%$ and $10\%$. This is the opposite of what one would expect from spectral estimation \cite{harris1978use, wang2023quantum, ding2024quantum}, where tapering suppresses side lobes at the cost of a wider main lobe. Side lobes, however, are not what limit AE: the signal contains only a single frequency, so there is no leakage from neighboring frequencies to be suppressed. What the window must instead deliver is a steeply rising $\mathbb{E}[L(\theta)-L(\lambda)]$ around its minimum $\lambda$, together with enough distinct depth scales to separate that minimum from its aliases. Tapering therefore sacrifices the former to buy a suppression that AE does not need.

How much precision a window buys per query is bounded by the Fisher information. A shot at depth $|m|$ carries $\mathcal{I}(m)=4m^2$, independent of $\lambda$, so at fixed budget $Q$ a window $p$ on $W_M$ collects
\begin{equation}
    \mathcal{I}(Q)=\frac{4Q\mathbb{E}_{m\sim p}[m^2]}{\mathbb{E}_{m\sim p}[|m|]},
\end{equation}
which grows with the mass placed at large depths and is maximized by $p(M)+p(-M)=1$. That window, however, is neither admissible nor able to distinguish the true minimum from its aliases, so $\mathcal{I}(Q)$ cannot be the only design objective, but it accounts for the ordering in \Cref{fig:1}(a), and admissibility is precisely the constraint that stops one from maximizing it. The dashed line in \Cref{fig:1}(a) is $\epsilon_{\mathrm{CRB}}:=1.96\sqrt{1/\mathcal{I}(Q)}$, evaluated for the window that maximizes the Fisher information over $W_M$, and converted to a $95$th percentile for Gaussian errors. Since the Cram\'er--Rao bound (CRB) assumes regular unbiased estimation and is blind to the global alias structure, it is not a rigorous lower bound for our least-squares estimator, but serves as a useful local benchmark away from the boundary, where \cref{fig:1} is evaluated. At large $M$ the observed $\epsilon_{95}$ of the uniform window lies within a factor of approximately $3/2$ of this benchmark. 

Furthermore, at fixed query budget $Q$ the shot count $N=Q/\mathbb{E}_{m\sim p}[|m|]$ is fixed by the window alone. At $M=1024$ and relative to the uniform window, the Gaussian and Kaiser windows require $3.1$ and  $1.5$ times as many shots, whereas the linear and cubic windows require only $0.75$ and $0.63$ times as many. Since every shot carries an initialization and readout cost independent of $|m|$, this may result in a further overhead in total runtime. 

\begin{figure}[htbp]
  \includegraphics{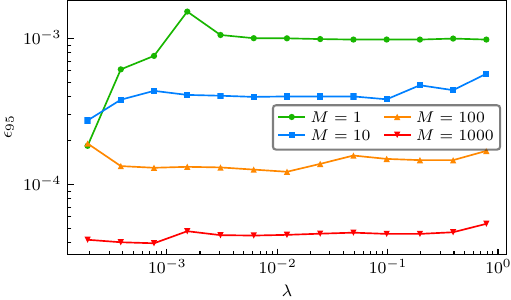}
  \caption{Amplitude-uniformity of WLSAE applied to uniform window at fixed query budget $Q=10^6$, over $10^4$ independent runs with grid size $K=\lceil 100\sqrt{\mathbb{E}[m^2]N}\rceil $. $\epsilon_{95}$ against $\lambda$ for four maximal circuit depths $M$. Without loss of generality, only the endpoint $\lambda\to0$ is shown. \Cref{alg:ldae} is exactly equivariant under $\lambda\mapsto\pi/2-\lambda$, so $\epsilon_{95}$ is symmetric about $\lambda=\pi/4$ and the $\lambda\to\pi/2$ endpoint requires no separate sweep.}
  \label{fig:2}
\end{figure}

\Cref{fig:2} sweeps $\lambda$ over three orders of magnitude at fixed query budget $Q=10^6$ using the uniform window. The curves do not degrade at the boundary and therefore confirm the amplitude-uniformity promised by \cref{theorem:main}---each simply shifts down by $M^{-1/2}$ as the depth increases. The dips of the $M=1$ and $M=10$ curves at small $\lambda$ arise because the estimate is squeezed against the boundary of the search interval. Since $\hat{\theta}\geq 0$ while $\lambda$ is close to $0$ the error can no longer fluctuate symmetrically around $\lambda$, and the $95$th percentile shrinks accordingly. In the extreme $\lambda\lesssim 1/(M\sqrt{N})$ a typical run observes no $-1$ outcome at all, the loss $L(\theta)$ is minimized at $\theta=0$, and the estimator returns the smallest grid point $\hat{\theta}=\frac{\pi}{4K}$, so that $|\hat{\theta}-\lambda|\leq \max\left(\lambda, \frac{\pi}{4K} \right)$. The grids used here are much coarser than \cref{lemma:main} prescribes: we made no attempt to optimize the constants of \cref{app:proof}, where the dominant looseness sits in \cref{lemma:lower_bound_near}.

Overall, the numerics show that a simple uniform window already outperforms the Gaussian window of GLSAE, and that anti-tapered admissible windows improve on it further---while GLSAE has in turn been shown to be competitive with state-of-the-art low-depth techniques~\cite{huang2026low}. Therefore our results suggest that WLSAE combines the strongest available theoretical guarantees with competitive practical performance.

\emph{Discussion.}
The regime in which existing guarantees fail is narrow, but not rare in practice. Guarantees of prior depth-tunable algorithms degrade once $\bar{\lambda}\lesssim\sigma/M$, corresponding to $a\lesssim (\sigma/M)^2$ near $a=0$ and $1-a\lesssim (\sigma/M)^2$ near $a=1$---a vanishing fraction of $[0,1]$, but precisely the fraction into which many applications fall. The reason is that for small amplitudes the meaningful quantity is the relative, not the absolute, error. Certifying $|\hat{a}-a|/a\leq \rho$ through \cref{eq:error_prop} requires $\epsilon \leq \sqrt{a}\left(\sqrt{1-a+\rho}-\sqrt{1-a} \right)$, for which $\epsilon \leq \sqrt{a}(\sqrt{1+\rho}-1)$ is sufficient at every $a\in[0,1]$ and tight as $a\to 0$. For fixed relative precision $\rho$, the tolerance therefore shrinks as $\epsilon \in \Theta(a^{1/2})$. A depth-tunable algorithm operating at depth $M\approx \epsilon^{-1+\beta}$ of \cref{cor:interpolation} is thus driven to $M\in \Theta(a^{(-1+\beta)/2})$, so that $a<a^{1-\beta}\approx M^{-2}$ for every $\beta>0$, up to $\rho$- and $\sigma$-dependent constants.
Consequently, any depth-limited algorithm estimating a small amplitude to fixed relative accuracy operates in the boundary region, and does so more deeply the shallower its circuits, i.e. the larger $\beta$. Only at the Heisenberg endpoint $\beta=0$ does the depth keep pace, which is why the boundary is invisible to full-depth amplitude estimation. In this region the fallback of \cite{huang2026low} spends $\Theta(\epsilon^{-2}\log\delta^{-1})$ queries, whereas \cref{cor:interpolation} spends $\mathcal{O}(\epsilon^{-1-\beta}\log(\epsilon^{-1}\delta^{-1}))$, a saving of $\epsilon^{-1+\beta}$ up to logarithmic factors. Equivalently, WLSAE spends $\tilde{\mathcal{O}}(a^{-(1+\beta)/2})$ queries against the $\Theta(a^{-1})$ samples required classically, retaining the full quantum advantage its depth budget permits.

The applications that operate in this regime are among the most natural targets for amplitude estimation. Overlap certification, that is deciding whether a prepared state has any considerable component on a target eigenstate, or estimating a ground-state overlap that decays with system size, is precisely the $a\to 0$ regime and a crucial task in quantum chemistry. Estimating acceptance probabilities of block encodings, tail probabilities in Monte Carlo risk estimation, and approximate counting with sparse solution sets also fall into the same regime. Trial-state verification and fidelity estimation sit at the opposite endpoint, where the quantity of interest is the infidelity $1-a$ and the argument above applies with $a$ replaced by $1-a$ throughout. In each case one wants a prescribed relative accuracy from circuits no deeper than the hardware allows, which is exactly what \cref{theorem:main} provides and where guarantees restricted to $\bar\lambda\gtrsim \sigma/M$ fail.

Compared to prior low-depth algorithms (see \cref{app:related_work} for details), \cref{cor:interpolation} is the first guarantee that holds simultaneously at both endpoints of the depth interpolation and uniformly in $a$. Power-law AE \cite{giurgica2022low} inherits the regularity conditions of the Bernstein--von Mises theorem, which fail as $\bar\lambda\to 0$, the hybrid algorithm of Rall and Fuller carries an explicit $\tilde{\mathcal{O}}(a^{-1/2})$ in query overhead \cite{rall2023amplitude}, QoPrime \cite{giurgica2022low} reaches only a set of discrete tradeoffs, excluding both endpoints, i.e. $\beta=0,1$, and GLSAE \cite{huang2026low} is the Gaussian member of our family, whose boundary patch we showed to be an artifact of the proof strategy rather than a statistical necessity.

\emph{Conclusion.}
We introduced Windowed Least Squares Amplitude Estimation, a depth-tunable and ancilla-free amplitude estimation algorithm parametrized by a discrete probability window over Grover circuit depths, and proved that every admissible window achieves the query-depth tradeoff $M^2N\in\mathcal{O}(\epsilon^{-2}\log(\epsilon^{-1}\delta^{-1}))$ for every angle $\lambda\in[0,\pi/2]$ and every point $\beta\in[0,1]$ of the interpolation between classical sampling ($\beta=1$) and the Heisenberg limit ($\beta=0$), matching the Zalka--Burchard bound $M^2N\in \Omega(\epsilon^{-2})$ up to a single logarithmic factor. 

To our best knowledge, this is the first low-depth amplitude estimation algorithm to attain the nearly-optimal query--depth tradeoff uniformly in the angle $\lambda$ across the entire depth interpolation. Numerically, the constant overheads are also small---a simple uniform window attains roughly $40\%$ lower error than GLSAE at equal query budget and maximal depth. Since WLSAE requires no ancillas, no controlled Grover operations, and no coherence beyond whatever depth the device supports, every improvement in hardware translates directly into improved accuracy at fixed budget, which makes it both a candidate for practical amplitude estimation in the early fault-tolerant regime and a testbed for benchmarking the approach to quantum advantage.

\bigskip
\emph{AI Usage \& Data Availability.}
The code for the numerics was developed with the assistance of Anthropic's Claude Opus 5 and is available on  \url{https://github.com/ox-quant-info/WLSAE}. Furthermore, Claude Opus 5 assisted in formulating some passages of the text.

\bigskip
\emph{Acknowledgments.}
The authors thank Po--Wei (George) Huang  for valuable discussions.
This research was supported by a grant funded by Moderna and Quantum Motion for an industrial PhD Studentship.
B.K. thanks UKRI for the Future Leaders Fellowship Theory to Enable Practical Quantum Advantage (MR/Y015843/1).
This research was funded in part by UKRI (MR/Y015843/1).
For the purpose of Open Access, the author has applied a CC BY public copyright licence
to any Author Accepted Manuscript version arising from this submission.

\bibliography{refs_ae}

\onecolumngrid
\appendix
\crefalias{section}{appendix}
\section{Related Work}
\label{app:related_work}
In 2022 Giurgica-Tiron \textit{et al.} proposed the first low-depth amplitude estimation algorithms---Power-law AE and QoPrime AE \cite{giurgica2022low}. Their first algorithm, Power-law AE, can be seen as a low-depth version of the QFT-free amplitude estimation algorithm by Suzuki \textit{et al.} \cite{suzuki2020amplitude}. It combines a power-law depth schedule with maximum likelihood estimation, and for $\beta\in(0,1]$ is claimed to estimate the angle $\lambda$ to accuracy $\epsilon$ using $\mathcal{O}(\epsilon^{-1-\beta})$ queries at maximal depth $\mathcal{O}(\epsilon^{-1+\beta})$, provided that the regularity conditions required for the Bernstein--von Mises theorem hold. While the algorithm has been demonstrated experimentally \cite{giurgica2022ex}, the authors state these regularity conditions as a hypothesis rather than verify them. Their log-likelihood is given by
\begin{align}
    l(X,\theta) = 2\sum_k\left[N_{k_0}\log\cos((2k+1)\theta)+N_{k_1}\log\sin((2k+1)\theta) \right]+\log Z,
\end{align}
where $X$ denotes the measurement outcomes, $Z$ is a normalization constant independent of $\theta$, and $N_{k_0}$ and $N_{k_1}$ count the outcomes $0$ and $1$ among the $N_k$ shots taken after $k$ Grover operations. Their outcome $1$ occurs with probability $\sin^2((2k+1)\lambda)$, so with $m=2k+1$ it corresponds to $Z_m=-1$ in our convention, and $N_{k_0}=(n_m+S_m)/2$, $N_{k_1}=(n_m-S_m)/2$. They therefore use the same measurement statistics $(n_m, S_m)$ that we use in \cref{eq:loss_statistics}, restricted to odd $m$, but combine them logarithmically as
\begin{align}
    l(X, \theta) &= \sum_{m\text{ odd}}\left[(n_m+S_m)\log\cos(m\theta)+(n_m-S_m)\log\sin(m\theta) \right]+\log Z\\
    &=\sum_{m\text{ odd}}\left[n_m\log\left(\frac{1}{2}\sin(2m\theta)\right)+S_m\log\cot(m\theta) \right]+\log Z,
\end{align}
whereas WLSAE combines them quadratically. Let $\Theta\subseteq[0,\pi/2]$ be an open interval around the true value $\lambda$ and $\bar{\Theta}$ its closure. Among the conditions summarized in Appendix A of \cite{giurgica2022low} (following \cite{hipp1976bernstein}), the Bernstein--von Mises theorem requires the log-likelihood $l(X,\theta)$ to be continuous on $\bar{\Theta}$ for every possible measurement outcome, twice differentiable on $\Theta$, and to satisfy suitable local moment bounds on its second derivative. To obtain a guarantee uniform over all $\lambda \in[0,\pi/2]$, the parameter domain $\Theta$ must extend arbitrarily close to both boundaries. However, for any outcome with $n_m-S_m>0$ for some $m$, we have $l(X, \theta)\to -\infty$ as $\theta\to 0$ because $\log\sin(m\theta)$ diverges. Hence, the required continuity cannot hold on a domain whose closure contains $0$. Restricting to $\Theta=[\eta, \pi/2-\eta]$ restores continuity, but does not yield a uniform guarantee. In particular, their estimator cannot return any $\hat{\theta}\notin \Theta$, so for $\lambda<\eta$ its error is bounded below by $\epsilon\geq\eta-\lambda$. Hence, achieving accuracy $\epsilon$ for every $\lambda\in[0,\pi/2]$ requires $\eta\leq \epsilon$. 

Moreover, the constants entering the Bernstein--von Mises rate bound depend on the choice of $\Theta$ through local regularity quantities that diverge as $\Theta$ approaches either boundary. Therefore, there are two possibilities: either $\eta$ is fixed, in which case the analysis cannot provide a uniform guarantee for $\lambda<\eta$, or $\eta$ is allowed to shrink with the target accuracy $\epsilon$, in which case the constants in the BvM bound are no longer $\mathcal O(1)$ and the assumption $N_{\mathrm{shot}}\in\mathcal O(1)$ used in \cite{giurgica2022low} can no longer be justified uniformly. Thus their analysis establishes the claimed query--depth tradeoff only for amplitudes bounded away from the boundaries, but not uniformly across all $\lambda\in[0,\pi/2]$.

Their second contribution, QoPrime AE, is a number-theoretic algorithm with a fully rigorous proof based on the Chinese remainder theorem. While QoPrime comes with a uniform angle guarantee, it is limited to discrete parameter settings $\beta=q/k$ with $k\geq 2$ and $1\leq q\leq k-1$, and therefore cannot continuously interpolate between classical sampling at $\beta=1$ and the Heisenberg limit at $\beta=0$. Notably, both endpoints $\beta=0$ and $\beta=1$ are excluded, so QoPrime cannot attain the exact Heisenberg limit. Furthermore, numerical experiments in \cite{giurgica2022low} find QoPrime to perform worse than Power-law AE in practice.

\bigskip
In section 5 of \cite{rall2023amplitude}, Rall and Fuller propose a hybrid quantum-classical amplitude estimation algorithm within their quantum signal processing framework. It estimates the amplitude $a_\mathrm{RF}=\sin(\lambda)$---rather than $a=\sin^2(\lambda)$---by maintaining a confidence interval $[a_{\mathrm{min}}^{(t)}, a_{\mathrm{max}}^{(t)}]$, which contains $a_\mathrm{RF}$ with high probability, and shrinking it adaptively at every iteration $t$. At every iteration it constructs semi-Pellian polynomial transformations of $a_\mathrm{RF}$ using quantum signal processing (QSP) \cite{low2017optimal, gilyen2019quantum}. Since each polynomial depends on the current confidence interval and hence on the outcomes of all previous iterations their circuits cannot be fixed in advance and their algorithm is therefore not parallelizable. They state that for every $\epsilon, \delta>0$ and $0\leq \beta<1$ their algorithm finds an estimate $\hat{a}_{\mathrm{RF}}$ of $a_\mathrm{RF}$ satisfying $|\hat{a}_{\mathrm{RF}}-a_\mathrm{RF}|\leq\epsilon$ with probability at least $1-\delta$ using $\tilde{\mathcal{O}}((a_{\mathrm{RF}}^{-1}+\epsilon^{-(1+\beta)})\log\delta^{-1})$ queries at maximal circuit depth $\mathcal{O}(a_{\mathrm{RF}}^{-1/(1-\beta)}+\epsilon^{-(1-\beta)})$. 
An additive guarantee on $a_\mathrm{RF}$ is stronger than an additive guarantee on $a$ near $a\to 0$. In particular, given $\hat a_{\mathrm{RF}}$ with $|\hat a_{\mathrm{RF}}-a_{\mathrm{RF}}|\leq\epsilon$, the estimate $\hat a:=\hat a_{\mathrm{RF}}^2$ satisfies $|\hat a-a|\leq2\sqrt{a}\,\epsilon+\epsilon^2$. It is not, however, stronger than the guarantee of \cref{theorem:main} anywhere. Since $\sin$ is $1$-Lipschitz, $|\hat\theta-\lambda|\leq\epsilon$ immediately gives $|\hat a_{\mathrm{RF}}-a_{\mathrm{RF}}|\leq\epsilon$, so \cref{theorem:main} certifies their metric at no extra cost, while the converse fails as $a\to1$, because inverting $\lambda=\arcsin(a_{\mathrm{RF}})$ costs a factor $(1-a)^{-1/2}$, so an $\epsilon$-additive estimate of $a_{\mathrm{RF}}$ certifies only $|\hat\theta-\lambda|\lesssim\epsilon(1-a)^{-1/2}$. 
Moreover, their overheads are explicitly amplitude-dependent. On a device limited to coherent depth $M$, the depth requirement $M\gtrsim a_{\mathrm{RF}}^{-1/(1-\beta)}$ restricts the algorithm to amplitudes $a_{\mathrm{RF}}\gtrsim M^{-(1-\beta)}$, a floor that rises as the circuits become shallower, i.e. as $\beta$ increases, and that excludes the classical endpoint at $\beta=1$. \cref{theorem:main}, by contrast, holds for every $a\in[0,1]$ and every $\beta\in[0,1]$ with no amplitude-dependent depth or query overhead. 

\bigskip
Vu, Cheng, and Rebentrost propose a general construction of low-depth amplitude estimation algorithms by applying classical Monte Carlo methods to a weakly biased quantum amplitude estimator \cite{vu2025low}. For every $\beta\in [0,1]$, their algorithms find, with probability at least $1-\delta$, an estimate $\hat{a}$ of $a$ satisfying $|\hat{a}-a|\leq \epsilon$, using $\tilde{\mathcal{O}}(\epsilon^{-1-\beta}\log\delta^{-1})$ queries at maximal depth $\tilde{\mathcal{O}}(\epsilon^{-1+\beta})$. Their construction therefore provides a continuous nearly-optimal interpolation for additive amplitude error, but does not establish the nearly-optimal tradeoff for uniform accuracy in the angle $\lambda$. Therefore, their stated guarantees cannot certify a quantum speedup at the boundary $a\to 0$ or $1$. Moreover, the concrete weakly biased estimators available for their construction require a QFT, whereas WLSAE obtains the full interpolation directly for the angle $\lambda$ using neither ancillas nor controlled Grover operations.

\bigskip
Huang and Koczor developed GLSAE, the Gaussian member of WLSAE, and provide angle guarantees $|\hat{\theta}-\lambda|\leq \epsilon$ only away from the boundary \cite{huang2026low}. At the boundary their analysis instead falls back to classical sampling, thereby only providing uniform additive guarantees on $a$. They argue that this patch is necessary because of the flattening loss near the boundary, which is caused by only sampling cosine signals, i.e. $Z_m$ with $\mathbb{E}[Z_m]=\cos(2m\lambda)$, instead of also sampling signals $X_m$ with $\mathbb{E}[X_m]=\sin(2m\lambda)$. The latter would remove the flattening, since $\partial_\lambda \sin(2m\lambda)$ does not vanish as $\lambda\to 0$ or $\pi/2$. However, obtaining these signals, for example by readout with the Hadamard test, requires an ancilla qubit and controlled Grover operations. While it is true that the loss flattens at the boundary, we show that the noise vanishes at the same rate, leaving us with a $\lambda$-independent signal-to-noise ratio (see \cref{eq:snr}), and therefore a $\lambda$-independent localization condition yielding the angle-uniform guarantee of \cref{theorem:main}. Notably, by showing that the cosine signals are sufficient on their own, this work removes the apparent caveat of their ancilla-free measurement model. 
Furthermore, we argue that, while the Gaussian window is an admissible window and a popular choice for applications in spectral estimation, it might not be ideal for amplitude estimation. The reason is that tapered windows, such as the Gaussian or the Kaiser window, suppress side lobes at the cost of a wider main lobe in order to suppress leakage from neighboring frequencies. The AE signal, however, contains a single frequency, so side lobe suppression is not required and the widened main lobe only weakens the localization condition. We therefore argue that tapered windows are counterproductive for AE and suggest using uniform or anti-tapered windows instead, a claim supported by \cref{fig:1}.

\bigskip
Inspired by GLSAE, Sun \textit{et al.} introduced Gaussian-sampled Chebyshev amplitude estimation (GCAE) \cite{sun2026quantum}, with the algorithm and its analysis deferred to a companion work still in preparation at the time of writing. As in GLSAE, they sample depths $m$ from a truncated Gaussian and, depending on the parity of $m$, measure $2\ket{\psi}\bra{\psi}-I$ on $(I-2P)G^{t-1}\ket{\psi}$ or $I-2P$ on $G^t\ket{\psi}$, obtaining a signal $Z_m\in\{-1,1\}$ satisfying \cref{eq:signal} from the same circuits as WLSAE. Instead of parameterizing in angle space, however, they parametrize in amplitude space, fitting $b:=1-2a$ through the loss
\begin{equation}
    L_\mathrm{GCAE}(x)=\frac{1}{N}\sum^N_{i=1}\left(Z_i-T_{m_i}(x) \right)^2,\qquad x\in[-1,1],
\end{equation}
where $T_m(x):=\cos(m\arccos(x))$ denotes the Chebyshev polynomial of the first kind. Substituting $x=\cos(2\theta)$ gives $T_m(x)=\cos(2m\theta)$ and recovers \cref{eq:loss} exactly, so their loss is a simple reparameterization of ours. Near the boundary, the map $\lambda \mapsto a=\sin^2\lambda$ compresses the parameter scale quadratically, so a loss that is quartic in $\theta$ becomes quadratic in $x$, and they report a guarantee uniform over $b\in[-1,1]$, equivalently over all $a\in[0,1]$, without a separate boundary patch.
This gain however does not survive the change of metric. Close to the boundary, $\lambda\to0$ or $\pi/2$, an amplitude guarantee $|\hat{a}-a|\leq \epsilon$ certifies only $|\hat{\theta}-\lambda|\leq \sqrt{\epsilon}$. The quadratic gain in the loss is thus exactly undone by the square root in the conversion---both effects originating from the same reparameterization. Also in the amplitude metric the angle-parametrized guarantee is the stronger one. At equal query cost \cref{theorem:main} certifies $|\hat{a}-a|\leq 2\sqrt{a(1-a)}\epsilon+\epsilon^2$, which is smaller than the flat $\epsilon$ by a factor $2\sqrt{a(1-a)}+\epsilon$ away from $a=1/2$, with the first term vanishing as $a\to0$ or $1$. Using variance-aware concentration inequalities, we showed that the quadratic loss bought by the reparameterization is not needed: WLSAE requires no boundary patch either, and attains the stronger guarantee, uniform across all $\lambda\in[0,\pi/2]$.

\section{Proof of Main Results}
\label{app:proof}
\begin{definition}[Window]
    Let $M\in\mathbb{N}_{\geq 1}$ and $W_M:=\{-M,...,M\}$. A function $p:W_M\to[0,1]$ is called window if $\sum_{m\in W_M}p(m)=1$.
\end{definition}
\begin{definition}[Admissible window]
    \label{def:admissible_app}
    Let $M\in \mathbb{N}_{\geq 1}$, let $\sigma\geq 1$, $b\in(0,1]$, and $C,C',C'',C'''>0$ be constants independent of $M$, and set $T:=\max(1,M/\sigma)$. A window $p$ on $W_M$ is called $(b,\sigma, C, C',C'',C''')$-admissible if 
    \begin{enumerate}
        \item[(W1)] $p(m)+p(-m)\geq \frac{b}{T}$ for all $m\in W_M$ with $1\leq m \leq T$, and
        \item[(W2)] $\mathbb{E}[m^2]\leq CT^2$, $\mathbb{E}[m^4]\leq C'T^4$, $\mathbb{E}[m^6]\leq C''T^6$, and $\mathbb{E}[m^8]\leq C'''T^8$
    \end{enumerate}
\end{definition}
\begin{remark}
\label{remark:definitions}
\Cref{def:admissible_app} refines \cref{def:admissible} of the main text: condition (W1) is identical, while (W2) makes the moment bounds explicit so that we can track constant factors. Condition (W2) imposes no additional restriction. In particular, every window on $W_M$ is supported on $|m| \leq M \leq  \sigma T$, so $\mathbb{E}[m^{2k}] \leq \sigma^{2k} T^{2k}$ for every $k$, and hence every $(b,\sigma)$-admissible window is $(b,\sigma,\sigma^2,\sigma^4,\sigma^6,\sigma^8)$-admissible. Conversely, every $(b,\sigma,C,C',C'',C''')$-admissible window is $(b,\sigma)$-admissible by (W1). The two definitions therefore describe the same class of windows, and every statement below applies verbatim to a $(b,\sigma)$-admissible window with $C = \sigma^2$, $C' = \sigma^4$, $C'' = \sigma^6$, $C''' = \sigma^8$.
\end{remark}

\begin{definition}[optimal loss]
    Let $p$ be a window and $\lambda\in [0,\pi/2]$. The optimal loss $\mathcal{E}^{(\lambda)}:[0,\pi/2]\to [0,4]$ is defined via
    \begin{equation}
        \mathcal{E}^{(\lambda)}(\theta)=\mathbb{E}_{m\sim p}\left[\mathcal{E}^{(\lambda)}_m(\theta) \right],\qquad\mathcal{E}_m^{(\lambda)}(\theta)= (\cos(2m\lambda)-\cos(2m\theta))^2.
    \end{equation}
\end{definition}

\begin{definition}[empirical loss]
    \label{def:empirical_loss}
    Let $p$ be a window, $\lambda\in [0,\pi/2]$, and $N\in\mathbb{N}_{\geq 1}$. For $m\in W_M$, define $p^{(\lambda)}_{m}:\{-1,1\}\to[0,1]$ via $p^{(\lambda)}_m(-1)=\sin^2(m\lambda)$ and $p^{(\lambda)}_m(1)=\cos^2(m\lambda)$. Given $N$ i.i.d.\ draws $\left(m_i,Z_{i} \right)$ with $m_i\sim p$ and, conditional on $m_i$, $Z_{i}\sim p^{(\lambda)}_{m_i}$, define the empirical loss $L^{(\lambda)}_N:[0,\pi/2]\to [0,4]$ via
    \begin{equation}
        L^{(\lambda)}_N(\theta)=\frac{1}{N}\sum_{i=1}^N\left(Z_{i}-\cos(2m_i\theta)\right)^2.
    \end{equation}
    Furthermore, define the empirical fluctuation as $D^{(\lambda)}_N(\theta):=L^{(\lambda)}_N(\theta)-L^{(\lambda)}_N(\lambda)-\mathcal{E}^{(\lambda)}(\theta)$.
\end{definition}

\begin{definition}[uniform grid]
    \label{def:uniform_grid}
    For $K\in\mathbb{N}_{\geq1}$, let $\kappa:=\frac{\pi}{2K}$ and define the uniform grid of spacing
    $\kappa$ with $K$ grid points as
    $G_K:=\{(j+\frac{1}{2})\kappa:0\leq j< K\}\subset[0,\pi/2]$. 
\end{definition}

\begin{remark}
    A few remarks are in order:
    \begin{enumerate}
        \item[(R1)] It is $\mathbb{E}_{Z_m\sim p_m^{(\lambda)}}[Z_m]=\cos(2m\lambda)$ and $\mathrm{Var}_{Z_m\sim p_m^{(\lambda)}}[Z_m]=\sin^2(2m\lambda)$.
        \item[(R2)] The covering radius of a uniform grid $G_K$ with $\kappa=\frac{\pi}{2K}$ is $\kappa/2$: for every $\lambda\in[0,\pi/2]$ there exists a $\theta\in G_K$  with $|\theta-\lambda|\leq\kappa/2$.
    \end{enumerate}
\end{remark}

\begin{proposition}
    Let $p$ be a window, $\lambda\in [0,\pi/2]$, and $N\in\mathbb{N}_{\geq 1}$. Then, for every $\theta\in[0,\pi/2]$, we have
    \begin{equation}
        \mathbb{E}\left[L^{(\lambda)}_N(\theta)-L^{(\lambda)}_N(\lambda)\right]=\mathcal{E}^{(\lambda)}(\theta),
    \end{equation}
    which implies $\mathbb{E}[D^{(\lambda)}_N(\theta)]=0$.
\end{proposition}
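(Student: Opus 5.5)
The plan is to reduce the claim to a single-sample computation via linearity of expectation and the i.i.d.\ structure, and then evaluate that single-sample expectation by conditioning on the depth. Write
\[
L^{(\lambda)}_N(\theta)-L^{(\lambda)}_N(\lambda)=\frac{1}{N}\sum_{i=1}^N Y_i,
\qquad
Y_i:=(Z_i-\cos(2m_i\theta))^2-(Z_i-\cos(2m_i\lambda))^2 .
\]
Since the pairs $(m_i,Z_i)$ are identically distributed, $\mathbb{E}[L^{(\lambda)}_N(\theta)-L^{(\lambda)}_N(\lambda)]=\mathbb{E}[Y_1]$. It therefore suffices to show $\mathbb{E}[Y_1]=\mathcal{E}^{(\lambda)}(\theta)$. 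All quantities are bounded (each $Y_i\in[-4,4]$), so there are no integrability issues.

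\textbf{Step 1: expand $Y_1$.} Use the tower property, conditioning on $m_1=m$. Expanding the squares, the $Z^2$ terms cancel (in fact $Z^2=1$, but cancellation is all that is needed). Writing $c_\theta:=\cos(2m\theta)$ and $c_\lambda:=\cos(2m\lambda)$, this leaves
\[
Y=-2Z(c_\theta-c_\lambda)+c_\theta^2-c_\lambda^2 .
\]

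\textbf{Step 2: take the conditional expectation.} By (R1), $\mathbb{E}[Z\mid m]=c_\lambda$. Substituting,
\[
\mathbb{E}[Y\mid m]=-2c_\lambda c_\theta+2c_\lambda^2+c_\theta^2-c_\lambda^2=(c_\theta-c_\lambda)^2=\mathcal{E}^{(\lambda)}_m(\theta).
\]

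\textbf{Step 3: average over the window and conclude.} Averaging over $m\sim p$ gives $\mathbb{E}[Y_1]=\mathcal{E}^{(\lambda)}(\theta)$, which proves the main identity. The statement $\mathbb{E}[D^{(\lambda)}_N(\theta)]=0$ then follows by linearity, because $\mathcal{E}^{(\lambda)}(\theta)$ is deterministic.

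There is no genuine obstacle here. The only point needing care is that (R1) is applied conditionally on $m_i$, which requires the sampling model of \cref{def:empirical_loss}: $Z_i\sim p^{(\lambda)}_{m_i}$ given $m_i$. One should also check that $p^{(\lambda)}_m$ does give mean $\cos^2(m\lambda)-\sin^2(m\lambda)=\cos(2m\lambda)$, which it does.
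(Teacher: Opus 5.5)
Your proposal is correct and follows essentially the same route as the paper: expand the squared residuals, apply (R1) conditionally on $m_i$ to get $\mathbb{E}[Y\mid m]=\mathcal{E}^{(\lambda)}_m(\theta)$, then average over $m\sim p$ and use linearity. The only cosmetic difference is that you cancel the $Z^2$ terms before taking expectations, whereas the paper substitutes $Z^2=1$ and computes the conditional expectation separately for $\phi=\theta$ and $\phi=\lambda$ before subtracting.
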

\begin{proof}
    It is 
    \begin{equation}
        (Z_{i}-\cos(2m_i\phi))^2=1-2Z_{i}\cos(2m_i\phi)+\cos^2(2m_i\phi),
    \end{equation}
    and, from (R1), we know that $\mathbb{E}_{Z_{i}\sim p_{m_i}^{(\lambda)}}\left[Z_{i}\right]=\cos(2m_i\lambda)$. It follows
    \begin{align}
        \label{eq:prop_1}
        \mathbb{E}_{Z_{i}\sim p_{m_i}^{(\lambda)}}\left[ (Z_{i}-\cos(2m_i\phi))^2 \right] = 1-2\cos(2m_i\lambda) \cos(2m_i\phi)+\cos^2(2m_i\phi).
    \end{align}
    Using that $\mathbb{E}\left[\,\cdot \, \right]:=\mathbb{E}_{m\sim p}\left[\mathbb{E}_{Z_{m}\sim p_{m}^{(\lambda)}}(\,\cdot \,) \right]$, it follows
    \begin{align}
        \mathbb{E}\left[L^{(\lambda)}_N(\theta)-L^{(\lambda)}_N(\lambda)\right]&=\frac{1}{N}\sum^N_{i=1}\mathbb{E}_{m_i\sim p}\left[-2\cos(2m_i\lambda) \cos(2m_i\theta)+\cos^2(2m_i\theta)+\cos^2(2m_i\lambda) \right]\\
        &=\frac{1}{N}\sum^N_{i=1}\mathbb{E}_{m_i\sim p}\left[\mathcal{E}^{(\lambda)}_{m_i}(\theta) \right] =\mathcal{E}^{(\lambda)}(\theta),
    \end{align}
    where we have used \eqref{eq:prop_1} twice---once for $\phi=\theta$ and once for $\phi=\lambda$. Using that $\mathbb{E}\left[ \mathcal{E}^{(\lambda)}(\theta)\right]=\mathcal{E}^{(\lambda)}(\theta)$, it follows that $\mathbb{E}[D^{(\lambda)}_N(\theta)]=0$.
\end{proof}

\begin{lemma}[localization]
    \label{lemma:localization}
    Let $p$ be a window, $\lambda\in [0,\pi/2]$, $N\in\mathbb{N}_{\geq 1}$, $K\in\mathbb{N}_{\geq1}$, and $G_K$ the uniform grid of \cref{def:uniform_grid}, with spacing $\kappa$. Let $\hat{\theta}\in\arg\min_{\theta\in G_K}L^{(\lambda)}_N(\theta)$ and $A\geq 1$. If
    \begin{equation}
        \label{eq:localization}
        \max_{\theta\in G_K:|\theta-\lambda|\leq \kappa/2}\left(\mathcal{E}^{(\lambda)}(\theta)+D^{(\lambda)}_N(\theta) \right) < \min_{\theta\in G_K:|\theta-\lambda|\geq A\kappa}\left(\mathcal{E}^{(\lambda)}(\theta)+D^{(\lambda)}_N(\theta) \right),
    \end{equation}
    then $|\hat{\theta}-\lambda|< A\kappa$.
\end{lemma}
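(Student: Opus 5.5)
The argument is a direct rewriting of the empirical loss; no probabilistic input is needed. By \cref{def:empirical_loss}, for every $\theta\in[0,\pi/2]$,
\begin{equation*}
    L^{(\lambda)}_N(\theta)=L^{(\lambda)}_N(\lambda)+\mathcal{E}^{(\lambda)}(\theta)+D^{(\lambda)}_N(\theta).
\end{equation*}
The term $L^{(\lambda)}_N(\lambda)$ does not depend on $\theta$. Hence minimizing $L^{(\lambda)}_N$ over $G_K$ is the same as minimizing $F(\theta):=\mathcal{E}^{(\lambda)}(\theta)+D^{(\lambda)}_N(\theta)$ over $G_K$, and $\hat\theta\in\arg\min_{G_K}F$. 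I will then argue by contradiction that $\hat\theta$ cannot lie in the far set $\{\theta\in G_K:|\theta-\lambda|\geq A\kappa\}$.

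First I need the near set $\{\theta\in G_K:|\theta-\lambda|\leq\kappa/2\}$ to be nonempty, so that the left-hand maximum in \cref{eq:localization} is a genuine maximum over a nonempty finite set. This is exactly (R2), the covering-radius property of $G_K$. Fix some $\theta_0$ in this near set. Then $F(\theta_0)\leq\max_{\text{near}}F$.

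Now suppose $|\hat\theta-\lambda|\geq A\kappa$. Then $\hat\theta$ belongs to the far set, which is therefore nonempty, so the right-hand minimum is well defined and $F(\hat\theta)\geq\min_{\text{far}}F$. Chaining these with the hypothesis gives
\begin{equation*}
    F(\theta_0)\leq\max_{\text{near}}F<\min_{\text{far}}F\leq F(\hat\theta).
\end{equation*}
This contradicts the minimality of $\hat\theta$, since $\theta_0\in G_K$. Therefore $|\hat\theta-\lambda|<A\kappa$.

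Since $A\geq1>1/2$, the near and far sets are disjoint, so there is no ambiguity in the argument. If the far set is empty, the conclusion holds trivially, and the hypothesis should be read with the convention $\min\emptyset=+\infty$. The only step needing any care is the nonemptiness of the near set, which (R2) supplies; everything else is bookkeeping.
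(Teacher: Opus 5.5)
Your proof is correct and follows essentially the same route as the paper's. Both use the identity $L^{(\lambda)}_N(\theta)=L^{(\lambda)}_N(\lambda)+\mathcal{E}^{(\lambda)}(\theta)+D^{(\lambda)}_N(\theta)$, take a near grid point $\theta_0$ supplied by (R2), and chain the hypothesis to rule out any far grid point as a minimizer, with the convention $\min\emptyset=+\infty$ covering an empty far set.
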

\begin{proof}
    From \cref{def:empirical_loss} we know that $L_N^{(\lambda)}(\theta)=D_N^{(\lambda)}(\theta)+L_N^{(\lambda)}(\lambda)+\mathcal{E}^{(\lambda)}(\theta)$. Hence \eqref{eq:localization} implies 
    \begin{equation}
        \max_{\theta\in G_K:|\theta-\lambda|\leq \kappa/2}L_N^{(\lambda)}(\theta) < \min_{\theta\in G_K:|\theta-\lambda|\geq A\kappa}L_N^{(\lambda)}(\theta).
    \end{equation}
    Now, let $\theta_0\in G_K$ with $|\theta_0-\lambda|\leq \kappa/2$, which exists by (R2). For any $\theta\in G_K$ with $|\theta-\lambda|\geq A\kappa$, we have
    \begin{equation}
        L_N^{(\lambda)}(\theta) \geq \min_{\theta'\in G_K:|\theta'-\lambda|\geq A\kappa}L_N^{(\lambda)}(\theta') > \max_{\theta'\in G_K:|\theta'-\lambda|\leq \kappa/2}L_N^{(\lambda)}(\theta')\ge L_N^{(\lambda)}(\theta_0),
    \end{equation}
    so no such $\theta$ attains the minimum. Hence, every $\hat{\theta}\in \arg\min_{\theta\in G_K}L^{(\lambda)}_N(\theta)$ satisfies $|\hat{\theta}-\lambda|<A\kappa$. Notably, we used the convention that $\min \emptyset =+\infty$, so that the lemma holds for all $A\geq 1$ and all $K\in \mathbb{N}\geq 1$.
\end{proof}

In the following we will derive properties of the optimal loss functions of $(b,\sigma, C,C', C'', C''')$-admissible windows. In particular, we are interested in finding lower and upper bounds on $\mathcal{E}^{(\lambda)}(\theta)$ for different parameter settings $\theta, \lambda\in [0,\pi/2]$ with respect to $T$. In the following, we will frequently use (W1), and $\mathcal{E}_0^{(\lambda)}(\theta)=0$ to write
\begin{equation}
    \mathcal{E}^{(\lambda)}(\theta)\geq \frac{b}{T}\sum^{\lfloor T \rfloor}_{m=1}\mathcal{E}_m^{(\lambda)}(\theta),
\end{equation}
as well as the identities 
\begin{equation}
    \label{eq:decomposition_of_loss}
    \mathcal{E}_m^{(\lambda)}(\theta)=4\sin^2(m(\theta+\lambda))\sin^2(m(\theta-\lambda)),
\end{equation}
and, for $x\in[-\frac{\pi}{2}, \frac{\pi}{2}]$
\begin{equation}
    \label{eq:sin_lower_bound}
    \sin^2(x)\geq \frac{4}{\pi^2}x^2.
\end{equation}

\begin{lemma}
    \label{lemma:upper_bound}
    For every $(b,\sigma, C, C', C'', C''')$-admissible window $p$, we have 
    \begin{equation}
        \mathcal{E}^{(\lambda)}(\theta)\leq C_1\min\left(1, T^2|\theta-\lambda|^2, T^4|\theta-\lambda|^2\max(\bar{\lambda}, |\theta-\lambda|)^2 \right),
    \end{equation}
    where $C_1:=4\max(1, C,9C')$ and $\bar{\lambda}:=\min(\lambda, \frac{\pi}{2}-\lambda)$.
\end{lemma}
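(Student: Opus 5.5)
We prove the three bounds inside the minimum separately; each holds pointwise in $m$, so it only remains to take expectations and invoke the moment conditions (W2). Throughout we write $\Delta:=|\theta-\lambda|$ and work from the factorization \eqref{eq:decomposition_of_loss},
\begin{equation*}
\mathcal{E}_m^{(\lambda)}(\theta)=4\sin^2(m(\theta+\lambda))\sin^2(m(\theta-\lambda)).
\end{equation*}

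\textbf{The constant bound.} Since both $\cos(2m\lambda)$ and $\cos(2m\theta)$ lie in $[-1,1]$, the factorization gives $\mathcal{E}_m^{(\lambda)}(\theta)\le 4$. Hence $\mathcal{E}^{(\lambda)}(\theta)\le 4\le C_1$.

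\textbf{The quadratic bound.} Bound $\sin^2(m(\theta+\lambda))\le 1$ and $\sin^2(m(\theta-\lambda))\le m^2\Delta^2$. Then $\mathcal{E}_m^{(\lambda)}(\theta)\le 4m^2\Delta^2$. Taking the expectation and using $\mathbb{E}[m^2]\le CT^2$ from (W2) yields $\mathcal{E}^{(\lambda)}(\theta)\le 4CT^2\Delta^2$.

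\textbf{The quartic bound.} This is the only place where the boundary structure enters, so it is the main step.

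First we reduce to the case $\lambda\le\pi/4$, i.e.\ $\bar\lambda=\lambda$. Under the substitution $\lambda\mapsto\pi/2-\lambda$, $\theta\mapsto\pi/2-\theta$, each argument $m(\theta+\lambda)$ becomes $m\pi-m(\theta+\lambda)$. Since $\sin^2(m\pi-x)=\sin^2 x$ for integer $m$, the loss $\mathcal{E}_m^{(\lambda)}(\theta)$ is invariant, and so is $\Delta$.

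With $\bar\lambda=\lambda$, we have $\theta+\lambda\le 2\lambda+\Delta$. We then bound
\begin{equation*}
\sin^2(m(\theta+\lambda))\le m^2(2\lambda+\Delta)^2\le 9m^2\max(\bar\lambda,\Delta)^2,
\end{equation*}
since $2\lambda+\Delta\le 3\max(\bar\lambda,\Delta)$. Combining this with $\sin^2(m\Delta)\le m^2\Delta^2$ gives
\begin{equation*}
\mathcal{E}_m^{(\lambda)}(\theta)\le 36\,m^4\Delta^2\max(\bar\lambda,\Delta)^2.
\end{equation*}
Taking the expectation and using $\mathbb{E}[m^4]\le C'T^4$ from (W2) gives $\mathcal{E}^{(\lambda)}(\theta)\le 36C'T^4\Delta^2\max(\bar\lambda,\Delta)^2$.

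\textbf{Conclusion.} The three bounds carry constants $4$, $4C$ and $36C'=4\cdot 9C'$. All are at most $C_1=4\max(1,C,9C')$, so $\mathcal{E}^{(\lambda)}(\theta)$ is bounded by $C_1$ times each of the three terms, and therefore by $C_1$ times their minimum.

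The one point needing care is the reflection reduction in the quartic step, specifically that it preserves both $\mathcal{E}_m^{(\lambda)}(\theta)$ and $\Delta$. Once that is in place, the remaining estimates are elementary applications of $|\sin x|\le|x|$.
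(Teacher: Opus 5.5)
Your proof is correct and follows essentially the paper's argument: the factorization \eqref{eq:decomposition_of_loss}, the pointwise bound $\sin^2 x\le\min(1,x^2)$, control of the $\theta+\lambda$ factor by $3\max(\bar\lambda,|\theta-\lambda|)$, and then the moment conditions (W2). The only cosmetic difference is how you treat $\lambda>\pi/4$. You use the reflection $(\lambda,\theta)\mapsto(\pi/2-\lambda,\pi/2-\theta)$, whereas the paper works with $s:=\min(\theta+\lambda,\pi-\theta-\lambda)$ and the two-sided bound $\max(\bar\lambda,|\theta-\lambda|)\le s\le 3\max(\bar\lambda,|\theta-\lambda|)$; it reuses the lower half of that bound in \cref{lemma:lower_bound_near}.
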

\begin{proof}
    Define $s:=\min(\theta+\lambda, \pi-\theta-\lambda)\in [0,\pi/2]$. It is 
\begin{equation}
    \label{eq:reparametrization}
    \sin^2(ms)=\sin^2(m\pi -m(\theta+\lambda))=\sin^2(m(\theta+\lambda)),
\end{equation}
where we have used that $\sin^2$ is $\pi$-periodic and symmetric around $0$. Furthermore, for any $x,y\geq 0$ we have 
\begin{equation}
    \max(y, |x-y|)\leq x+y\leq 3\max(y, |x-y|).
\end{equation}
While the first inequality is trivial, the second follows from $x+y=(x-y)+2y\leq |x-y|+2y\leq 3\max(y, |x-y|)$. It follows
\begin{equation}
    \label{eq:min_eq_1}
    \max(\lambda, |\theta-\lambda|)\leq \theta+\lambda\leq 3\max(\lambda, |\theta-\lambda|),
\end{equation}
and
\begin{equation}
    \label{eq:min_eq_2}
    \max(\frac{\pi}{2}-\lambda, |\theta-\lambda|)=\max(\frac{\pi}{2}-\lambda, |\frac{\pi}{2}-\theta+\lambda-\frac{\pi}{2}|)\leq \pi-\theta-\lambda\leq 3\max(\frac{\pi}{2}-\lambda, |\frac{\pi}{2}-\theta+\lambda-\frac{\pi}{2}|)=3\max(\frac{\pi}{2}-\lambda, |\theta-\lambda|). 
\end{equation}
Taking the minimum of \eqref{eq:min_eq_1} and \eqref{eq:min_eq_2} and using $\min(\max(a,c), \max(b,c))=\max(\min(a,b),c)$ , it follows
\begin{equation}
    \label{eq:barlambdaidenity}
    \max(\bar{\lambda}, |\theta-\lambda|)\leq s \leq 3\max(\bar{\lambda}, |\theta-\lambda|),
\end{equation}
where $\bar{\lambda}:=\min(\lambda, \frac{\pi}{2}-\lambda)$. Moreover, we have $\sin^2(x)\leq \min(1,x^2)$ for all $x\in\mathbb{R}$. 
    Using \eqref{eq:decomposition_of_loss}, \eqref{eq:reparametrization}, and \eqref{eq:barlambdaidenity}, it follows
\begin{align}    
    \mathcal{E}_m^{(\lambda)}(\theta)&\leq 4\min\left(1, m^2s^2\right)\min\left(1, m^2(\theta-\lambda)^2\right)\\&\leq 4\min\left(1, m^2|\theta-\lambda|^2, 9m^2\max(\bar\lambda, |\theta-\lambda|)^2,  9m^4|\theta-\lambda|^2\max(\bar\lambda, |\theta-\lambda|)^2 \right)\\
    \label{eq:upper_bound_per_term_loss}
    &\leq 4\min\left(1, m^2|\theta-\lambda|^2,  9m^4|\theta-\lambda|^2\max(\bar\lambda, |\theta-\lambda|)^2 \right),
\end{align}
where we used that we always have $|\theta-\lambda|^2\leq 9\max(\bar \lambda, |\theta-\lambda|)^2$. Using (W2) and $\mathbb{E}[\min(\,\cdot\,,\,\cdot\,)]\leq \min(\mathbb{E}[\,\cdot\,], \mathbb{E}[\,\cdot\,])$, it follows
\begin{align}
    \mathcal{E}^{(\lambda)}(\theta)&\leq4\min\left(1,CT^2|\theta-\lambda|^2, 9C'T^4|\theta-\lambda|^2\max(\bar \lambda, |\theta-\lambda|)^2 \right)\\&\leq C_1\min\left(1, T^2|\theta-\lambda|^2, T^4|\theta-\lambda|^2\max(\bar \lambda, |\theta-\lambda|)^2 \right),
\end{align}
where $C_1:=4\max\left(1, C, 9C' \right)$.
\end{proof}

\begin{lemma}
    \label{lemma:lower_bound_near}
    For every $(b,\sigma, C, C', C'', C''')$-admissible window $p$ and $|\theta-\lambda|\leq \frac{\pi}{2T}$, we have 
    \begin{equation}
         \mathcal{E}^{(\lambda)}(\theta)\geq C_2T^2|\theta-\lambda|^2\min\left(1, T^2\max(\bar{\lambda}, |\theta-\lambda|)^2\right),
    \end{equation}
    where $C_2:=\frac{2b}{1215\pi^4}$ and $\bar{\lambda}:=\min(\lambda, \frac{\pi}{2}-\lambda)$.
\end{lemma}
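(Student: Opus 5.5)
Write $\Delta:=|\theta-\lambda|$ and $s:=\min(\theta+\lambda,\pi-\theta-\lambda)$ as in the proof of \cref{lemma:upper_bound}. The plan is to use only the small depths guaranteed by (W1), via
\begin{equation*}
    \mathcal{E}^{(\lambda)}(\theta)\geq \frac{b}{T}\sum_{m=1}^{\lfloor T\rfloor}4\sin^2(ms)\sin^2(m\Delta),
\end{equation*}
which follows from \cref{eq:decomposition_of_loss,eq:reparametrization}. On each factor we apply \cref{eq:sin_lower_bound}, restricted to those $m$ for which both arguments lie in $[0,\pi/2]$. For the second factor this is automatic: $m\leq T$ and $\Delta\leq \pi/(2T)$ give $m\Delta\leq\pi/2$, hence $\sin^2(m\Delta)\geq \frac{4}{\pi^2}m^2\Delta^2$ for every $m\leq T$. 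The first factor needs control on $ms$, and \cref{eq:barlambdaidenity} supplies the two-sided comparison $\max(\bar\lambda,\Delta)\leq s\leq 3\max(\bar\lambda,\Delta)$.

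We split into two cases according to the size of $Ts$. In the case $3T\max(\bar\lambda,\Delta)\leq \pi/2$, we have $ms\leq \pi/2$ for all $m\leq T$, so
\begin{equation*}
    \sin^2(ms)\geq \frac{4}{\pi^2}m^2s^2\geq \frac{4}{\pi^2}m^2\max(\bar\lambda,\Delta)^2.
\end{equation*}
The summand is then at least $\frac{64}{\pi^4}m^4\Delta^2\max(\bar\lambda,\Delta)^2$. Summing with $\sum_{m=1}^{\lfloor T\rfloor}m^4\geq c\,T^5$ (for instance $\sum_{m\le n}m^4\ge n^5/5$ and $\lfloor T\rfloor\ge T/2$) yields a bound of order $bT^4\Delta^2\max(\bar\lambda,\Delta)^2$. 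This covers the $T^2\max(\cdot)^2$ branch of the minimum.

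In the complementary case $Ts>\pi/6$, the full sum is no longer usable directly, and we restrict to $m\leq m_0:=\lfloor \pi/(2s)\rfloor$, i.e.\ those $m$ with $ms\leq\pi/2$. On this range $\sin^2(ms)\geq \frac{4}{\pi^2}m^2s^2$, and $s\geq\max(\bar\lambda,\Delta)$ together with $m_0 s\gtrsim 1$ gives a sum of order $\frac{b}{T}\Delta^2 s^2 m_0^5\sim \frac{b}{T}\Delta^2 s^{-3}$. Since $s\leq 3\max(\bar\lambda,\Delta)\leq$ const and $Ts\geq\pi/6$, this falls short of the required $bT^2\Delta^2$ when $Ts$ is large.

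To handle large $Ts$, we take the sum over all $m\leq T$ and use an averaging fact instead: $\sin^2(ms)$ has mean roughly $1/2$ over any window of length $\gtrsim 1/s$. Concretely, among two consecutive integers $m,m+1$ at least one has $\sin^2(ms)\geq \sin^2(s/2)$, which is only useful when $s$ is not tiny. A cleaner route is to note that $s\in[\pi/(6T),\pi/2]$ and that, for each $m$, the products satisfy $\sin^2(ms)+\sin^2(2ms)\geq c'\min(1,m^2s^2)$. Pairing $m$ with $2m$ over $m\leq T/2$, and using that $m^2\Delta^2$ and $(2m)^2\Delta^2$ agree up to a factor $4$, gives
\begin{equation*}
    \sum_{m\le T} m^2\Delta^2\sin^2(ms)\gtrsim \sum_{m\le T/2}m^2\Delta^2\min(1,m^2s^2)\gtrsim T^3\Delta^2,
\end{equation*}
since a constant fraction of $m\leq T/2$ has $ms\geq\pi/12$. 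Multiplying by $\frac{16b}{\pi^2T}$ gives $\gtrsim bT^2\Delta^2$, the branch where the minimum equals $1$.

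Pairing the $m=1$ term requires $T\ge2$; for $T<2$ only $m=1$ is available, and $\sin^2(s)\gtrsim s^2$ with $s\lesssim 1$ directly gives the fourth-power bound, which dominates because $T^2\max(\cdot)^2\lesssim 1$. The main obstacle is this large-$Ts$ regime, because $\sin^2(ms)$ oscillates and cannot be bounded termwise. The inequality $\sin^2x+\sin^2 2x\geq c'\min(1,x^2)$ must be checked explicitly, including near $x\approx\pi$ where both terms are small; there I would use triples $m,2m,3m$ if needed. Finally, collecting all constants (factors $4$, $4/\pi^2$ twice, $3^2$, $1/5$, and the floor losses) into the single constant $C_2=\frac{2b}{1215\pi^4}$ is routine bookkeeping; note that $1215=5\cdot 3^5$, which suggests the $3\max$ comparison and the $m^4$-sum dominate the constants.
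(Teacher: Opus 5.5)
Your proof has a genuine gap in the large-$Ts$ regime, which is the heart of the lemma. The pointwise inequality $\sin^2 x+\sin^2 2x\geq c'\min(1,x^2)$ is false. At $x=\pi$ the left side vanishes while the right side equals $c'$. Passing to triples $m,2m,3m$, or to any set of integer multiples of $m$, cannot help, since $\sin^2(k\pi)=0$ for every integer $k$.

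This is not a negligible edge case. The value $s=\pi/2$ occurs in the lemma's range (take $\theta+\lambda=\pi/2$ with $|\theta-\lambda|\leq\frac{\pi}{2T}$). Then every even $m$ has $\sin^2(kms)=0$ for all $k$, so the pairing certifies nothing for half of the indices. For $s$ near $\pi/q$ the same happens near every multiple of $q$. Consequently the step $\sum_{m\leq T}m^2\Delta^2\sin^2(ms)\gtrsim\sum_{m\leq T/2}m^2\Delta^2\min(1,m^2s^2)$ is unjustified. Its conclusion ($\gtrsim T^3\Delta^2$) is true, but proving it needs averaging over \emph{consecutive} $m$ in blocks of length of order $1/s$, not over multiples. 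Your observation about $m,m+1$ is the right kind of statement, but at the wrong scale.

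The paper supplies exactly this averaging. For $T\geq 3$ and $s\geq\frac{3\pi}{2T}$ it proceeds as follows:
\begin{itemize}
\item it keeps only $\lfloor T/3\rfloor<m\leq\lfloor T\rfloor$, so that $m^2\geq T^2/9$;
\item it writes $\sin^2(ms)=\frac12(1-\cos 2ms)$;
\item it bounds the geometric sum by $\bigl|\sum_m e^{2ims}\bigr|\leq 1/\sin s\leq T/3$, using $\sin s\geq 2s/\pi\geq 3/T$.
\end{itemize}
At least $T/12$ uncancelled terms remain, giving $\mathcal{E}^{(\lambda)}(\theta)\geq\frac{2b}{27\pi^2}T^2\Delta^2$.

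This cancellation bound beats the number of terms only once $Ts$ exceeds a constant. At your threshold $Ts\approx\pi/6$ one has $1/\sin s\approx 6T/\pi>T$, so it gives nothing there. The split must therefore move to $s=\frac{3\pi}{2T}$. The intermediate range is then handled by the quartic argument restricted to $m\leq\lfloor T/3\rfloor$ (so that $ms\leq\pi/2$). That is essentially your restricted-sum estimate, and it is adequate for bounded $Ts$.

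This is also where the constant comes from: $C_2=\frac{64b}{5\cdot 6^5\pi^4}$, with $\frac15$ from $\sum m^4$ and $6^5$ from $\lfloor T/3\rfloor\geq T/6$. For small $T$ (there $T\leq 3$) the paper, like you, uses only the $m=1$ term.
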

\begin{proof}
Define $s:=\min(\theta+\lambda, \pi-\theta-\lambda)\in [0,\pi/2]$ and $\bar{\lambda}:=\min(\lambda, \frac{\pi}{2}-\lambda)$. For $|\theta-\lambda|\leq \frac{\pi}{2T}$ we can then use \eqref{eq:sin_lower_bound} and \eqref{eq:reparametrization} to obtain
    \begin{equation}
        \mathcal{E}^{(\lambda)}(\theta)\geq \frac{16b}{\pi^2T}|\theta-\lambda|^2\sum^{\lfloor T \rfloor}_{m=1}m^2\sin^2(ms).
    \end{equation}
    To proceed we will temporarily distinguish between the following two cases.
    \begin{enumerate}
        \item $T\in [1,3]$. We then have
        \begin{equation}
            \mathcal{E}^{(\lambda)}(\theta)\geq \frac{16b}{27\pi^2}T^2|\theta-\lambda|^2\sin^2(s).
        \end{equation}
        \begin{enumerate}
            \item[(i)] If $s\leq \frac{\pi}{2T}$, we use \eqref{eq:sin_lower_bound} and \eqref{eq:barlambdaidenity} to get
            \begin{equation}
                \label{eq:first_case}
                \mathcal{E}^{(\lambda)}(\theta)\geq \frac{64b}{243\pi^4}T^4|\theta-\lambda|^2s^2\geq \frac{64b}{243\pi^4}T^4|\theta-\lambda|^2\max(\bar{\lambda}, |\theta-\lambda|)^2.
            \end{equation}
        
            \item[(ii)] If $\frac{\pi}{2T}\leq s\leq \frac{\pi}{2}$, it is
            \begin{equation}
                \mathcal{E}^{(\lambda)}(\theta)\geq \frac{16b}{27\pi^2}T^2|\theta-\lambda|^2\sin^2(\frac{\pi}{2 T})\geq\frac{4b}{27\pi^2}T^2|\theta-\lambda|^2,
            \end{equation}
            where in the last step we have used $\sin^2(\frac{\pi}{2T})\geq \sin^2(\frac{\pi}{2\cdot 3})\geq \frac{1}{4}$.
        \end{enumerate}
        In both regimes the bound reads $\mathcal{E}^{(\lambda)}(\theta)\geq\frac{4b}{27\pi^2}T^2|\theta-\lambda|^2 c$, with $c=1$ in (ii) and $c=\frac{16}{9\pi^2}T^2\max(\bar{\lambda}, |\theta-\lambda|)^2$ in (i). Since $\min\left(1, \frac{16}{9\pi^2}T^2\max(\bar{\lambda}, |\theta-\lambda|)^2\right)$ is at most either value, we obtain
        \begin{equation}
            \label{eq:Tleq3_bound}
            \mathcal{E}^{(\lambda)}(\theta)\geq \frac{4b}{27\pi^2}T^2|\theta-\lambda|^2\min\left(1, \frac{16}{9\pi^2}T^2\max(\bar{\lambda}, |\theta-\lambda|)^2 \right)
        \end{equation}
        for all $\theta,\lambda\in[0,\frac{\pi}{2}]$ with $|\theta-\lambda|\leq \frac{\pi}{2T}$ whenever $T\in[1,3]$.

        \item $T\geq 3$. 
        \begin{enumerate}
            \item[(i)] If $s\leq \frac{3\pi}{2T}$, we use \eqref{eq:sin_lower_bound} to obtain
            \begin{align}
                \mathcal{E}^{(\lambda)}(\theta)&\geq \frac{16b}{\pi^2 T}|\theta-\lambda|^2\sum^{\lfloor T/3\rfloor}_{m=1}m^2\sin^2(ms)\\
                &\geq \frac{64b}{\pi^4 T}|\theta-\lambda|^2\max(\bar{\lambda}, |\theta-\lambda|)^2\sum^{\lfloor T/3\rfloor}_{m=1}m^4\\
                &\geq \frac{64b}{5\pi^4T}|\theta-\lambda|^2\max(\bar{\lambda}, |\theta-\lambda|)^2\lfloor T/3 \rfloor^5\\  &\geq \frac{64b}{5\cdot 6^5\pi^4}T^4|\theta-\lambda|^2\max(\bar\lambda, |\theta-\lambda|)^2\\
                \label{eq:first_bound_forTgeq3}
                &=\frac{2b}{1215\pi^4}T^4|\theta-\lambda|^2\max(\bar\lambda, |\theta-\lambda|)^2,
            \end{align}
            where we have used that
            \begin{equation}
                \sum_{m=1}^Mm^k\geq\int^{M}_0x^k\mathrm{d}x=\frac{M^{k+1}}{k+1}
            \end{equation}
            for every $M\in\mathbb{N}_{\geq 1}$, and $\lfloor T/3\rfloor^5\geq \frac{T^5}{6^5}$.
            
            \item[(ii)] If $\frac{3\pi}{2T}\leq s\leq \frac{\pi}{2}$, we have
            \begin{align}
                \mathcal{E}^{(\lambda)}(\theta)&\geq \frac{16b}{\pi^2T}|\theta-\lambda|^2\sum^{\lfloor T\rfloor}_{m=\lfloor T/3 \rfloor +1}m^2\sin^2(ms) \\
                &\geq \frac{8b}{9\pi^2}T|\theta-\lambda|^2\sum^{\lfloor T\rfloor}_{m=\lfloor T/3 \rfloor +1}(1-\cos(2ms)) \\
                \label{eq:last_line_reference}
                &\geq\frac{8b}{9\pi^2}T|\theta-\lambda|^2\left(\lfloor T \rfloor - \left\lfloor \frac{T}{3}\right\rfloor-\left|\sum^{\lfloor T\rfloor}_{m=\lfloor T/3 \rfloor +1}e^{i2ms}\right|\right),
            \end{align}
            where in the last step we have used that $\mathrm{Re}(z)\leq |\mathrm{Re}(z)|\leq |z|$. It is
            \begin{align}
                \sum^{\lfloor T\rfloor}_{m=\lfloor T/3 \rfloor +1}e^{i2ms}&=e^{i2(\lfloor T/3\rfloor+1)s}\sum_{m=0}^{\lfloor T \rfloor -\lfloor T/3 \rfloor-1}e^{i2ms}\\
                &=e^{i2(\lfloor T/3\rfloor+1)s}\frac{1-e^{i2(\lfloor T\rfloor -\lfloor T/3\rfloor)s}}{1-e^{i2s}},
            \end{align}
            where in the last step we used that the sum is a geometric series. To upper bound its magnitude, we can ignore the phase and notice that the numerator cannot exceed 2. For the denominator, we find
            \begin{equation}
                |1-e^{2is}|=2\sin s\geq 2\sin(\frac{3\pi}{2T})\geq \frac{6}{T}.
            \end{equation}
            It follows
            \begin{equation}
                \left|\sum^{\lfloor T\rfloor}_{m=\lfloor T/3 \rfloor +1}e^{i2ms}\right|\leq T/3.
            \end{equation}
            Inserting this bound in \eqref{eq:last_line_reference}, we obtain
            \begin{equation}
                \mathcal{E}^{(\lambda)}(\theta)\geq \frac{8b}{9\pi^2}T|\theta-\lambda|^2\left(\lfloor T \rfloor - \left\lfloor \frac{T}{3}\right\rfloor-\frac{T}{3}\right)\geq \frac{8b}{9\pi^2}T^2|\theta-\lambda|^2\left(\frac{3}{4}-\frac{2}{3} \right)\geq \frac{2b}{27\pi^2}T^2|\theta-\lambda|^2.
            \end{equation}
        \end{enumerate}
    We merge both regimes to obtain
    \begin{equation}
        \label{eq:Tgeq3_bound}
        \mathcal{E}^{(\lambda)}(\theta)\geq\frac{2b}{27\pi^2}T^2|\theta-\lambda|^2\min\left(1,\frac{1}{45\pi^2}T^2\max(\bar{\lambda}, |\theta-\lambda|)^2 \right)
    \end{equation}
    for all $\theta,\lambda\in[0,\frac{\pi}{2}]$ with $|\theta-\lambda|\leq \frac{\pi}{2T}$ whenever $T\geq 3$.
    \end{enumerate}
    Notably, \eqref{eq:Tgeq3_bound} is weaker than (and therefore implied by) \eqref{eq:Tleq3_bound}. Hence, \eqref{eq:Tgeq3_bound} holds for all $T\geq 1$. Furthermore, since $1\geq \frac{1}{45\pi^2}$, we have
    \begin{equation}
        \mathcal{E}^{(\lambda)}(\theta)\geq C_2T^2|\theta-\lambda|^2\min\left(1, T^2\max(\bar{\lambda}, |\theta-\lambda|)^2 \right),
    \end{equation}
    where $C_2:=\frac{2b}{1215\pi^4}$.
\end{proof}

\begin{lemma}
    \label{lemma:lower_bound_far}
    For every $(b,\sigma, C, C', C'', C''')$-admissible window $p$ and $|\theta-\lambda|\geq \frac{\pi}{2T}$, we have 
    \begin{equation}
        \mathcal{E}^{(\lambda)}(\theta)\geq C_3,
    \end{equation}
    where $C_3:=\frac{4}{243}b$.
\end{lemma}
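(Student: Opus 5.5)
The plan is to reuse the reduction from \cref{lemma:lower_bound_near}. Write $\Delta:=|\theta-\lambda|\geq\frac{\pi}{2T}$ and $s:=\min(\theta+\lambda,\pi-\theta-\lambda)$. By (W1), \eqref{eq:decomposition_of_loss} and \eqref{eq:reparametrization},
\begin{equation*}
\mathcal{E}^{(\lambda)}(\theta)\geq \frac{4b}{T}\sum_{m=1}^{\lfloor T\rfloor}\sin^2(ms)\sin^2(m\Delta).
\end{equation*}
By \eqref{eq:barlambdaidenity} both frequencies satisfy $\frac{\pi}{2T}\leq\Delta\leq s\leq\frac{\pi}{2}$. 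So it suffices to show that the average of $\sin^2(ms)\sin^2(m\Delta)$ over $m\leq\lfloor T\rfloor$ is bounded below by an absolute constant, which we want to be at least $\frac{1}{243}$. As in \cref{lemma:lower_bound_near}, I would split into a small-$T$ case and a large-$T$ case.

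For $T\in[1,3]$, keep only the $m=1$ term. Both $s$ and $\Delta$ lie in $[\frac{\pi}{2T},\frac{\pi}{2}]\subseteq[\frac{\pi}{6},\frac{\pi}{2}]$, so $\sin^2 s\,\sin^2\Delta\geq\sin^4(\pi/6)=\frac{1}{16}$. This gives $\mathcal{E}^{(\lambda)}(\theta)\geq\frac{4b}{3\cdot 16}=\frac{b}{12}\geq\frac{4b}{243}$.

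For $T\geq3$, expand the product using $\sin^2x=\frac12(1-\cos2x)$ and $2\cos a\cos b=\cos(a+b)+\cos(a-b)$:
\begin{equation*}
4\sin^2(ms)\sin^2(m\Delta)=1-\cos(2ms)-\cos(2m\Delta)+\tfrac12\cos(2m(s+\Delta))+\tfrac12\cos(2m(s-\Delta)).
\end{equation*}
I would sum this over a block of consecutive integers $m$, most likely $\lfloor T/3\rfloor<m\leq\lfloor T\rfloor$ as in \cref{lemma:lower_bound_near}, possibly adjusted to improve the constants. Each oscillatory sum is then controlled by the geometric-series bound $\bigl|\sum_m e^{2imx}\bigr|\leq 1/|\sin x|$.

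For the two negative terms, $x\in\{s,\Delta\}$ lies in $[\frac{\pi}{2T},\frac{\pi}{2}]$, so $|\sin x|\geq\frac{1}{T}$ and each sum is at most of order $T$ in size. The two cross frequencies $s\pm\Delta$ may lie arbitrarily close to $0$ or $\pi$, where the geometric bound fails. In exactly that regime, however, $\cos(2m(s\pm\Delta))$ is close to $+1$ for all $m$ in the block, so those terms help rather than hurt. Concretely, I would split on whether the distance of $s\pm\Delta$ to $\pi\mathbb{Z}$ is at most $\frac{\pi}{4\lfloor T\rfloor}$: if so, every cosine in the block is nonnegative; otherwise the geometric bound applies.

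The main obstacle is the negative terms $-\cos(2ms)$ and $-\cos(2m\Delta)$ at the lower end, $s$ or $\Delta\approx\frac{\pi}{2T}$. There the crude $1/\sin$ bound is of the same order as the block length, so the bookkeeping naively cancels to zero. What I would try first is to exploit the sign structure in that regime instead of its absolute value. If $\Delta\leq\frac{3\pi}{2T}$, then for $m\in(T/3,T]$ we have $m\Delta\in[\frac{\pi}{6},\frac{3\pi}{2}]$, so $\sin^2(m\Delta)$ is at least a constant on a positive fraction of the block. Restricting to that sub-block, the geometric-series argument of case (ii) of \cref{lemma:lower_bound_near} can be applied to $\sin^2(ms)$ alone, and symmetrically with $s$ and $\Delta$ swapped. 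If both $s,\Delta\geq\frac{3\pi}{2T}$, the four-term expansion closes directly, since each $1/|\sin|$ bound is then at most $T/3$. Fixing the sub-block and threshold choices so that all cases yield at least $\frac{4b}{243}$ is where I expect the real work to lie.
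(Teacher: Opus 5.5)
Your reduction to $\frac{4b}{T}\sum_{m=1}^{\lfloor T\rfloor}\sin^2(ms)\sin^2(m\Delta)$ is right, and your case $T\in[1,3]$ is complete. Your $b/12$ even beats the paper's $4b/T^5\ge 4b/243$. The gap is the case $T\ge 3$, which is the real content of the lemma and which you leave as a plan. With the tools you propose, that plan does not close, because the modulus bound $|\sum_{m\in B}e^{2imx}|\le 1/|\sin x|$ on the block $B=(\lfloor T/3\rfloor,\lfloor T\rfloor]$ is too lossy.

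\begin{itemize}
\item In your case $s,\Delta\ge\frac{3\pi}{2T}$, you have $|B|<\frac{2T}{3}+1$, while each of the two negative sums is only bounded by $1/\sin(\frac{3\pi}{2T})\le\frac{T}{3}$. The budget is therefore used up before the cross terms are even considered.
\item For the cross terms, a distance of $s\pm\Delta$ to $\pi\mathbb{Z}$ just above $\frac{\pi}{4\lfloor T\rfloor}$ gives $1/|\sin(s\pm\Delta)|\approx\frac{4}{\pi}\lfloor T\rfloor>|B|$, which is vacuous.
\item In your case $\Delta\le\frac{3\pi}{2T}$, take $\Delta=\frac{3\pi}{2T}$. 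Then $\sin(m\Delta)$ vanishes at $m=\frac{2T}{3}$, so the sub-block on which $\sin^2(m\Delta)$ is bounded below splits into two pieces, each shorter than $\frac{T}{3}$. Meanwhile $1/\sin s$ can equal about $\frac{T}{3}$ (e.g.\ $\lambda=0$, $s=\Delta$). The geometric-series argument of case (ii) of \cref{lemma:lower_bound_near} then gives nothing on either piece.
\end{itemize}

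The obstacle you flag at $s$ or $\Delta\approx\frac{\pi}{2T}$ comes from discarding the numerator of the geometric sum. Over the full range $1\le m\le M$,
\[
\sum_{m=1}^{M}\cos(mx)=\tfrac12\bigl(D_M(x)-1\bigr),\qquad D_M(x)=\frac{\sin((M+\frac12)x)}{\sin(x/2)}.
\]
At $x\approx\frac{\pi}{T}$ the numerator is itself near $\sin\pi=0$. For integer $T$ one even has $\sum_{m=1}^{T}\cos(m\pi/T)=-1$, so the ``negative'' term $-\sum\cos(2ms)$ actually helps at $s=\frac{\pi}{2T}$.

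The paper therefore drops the sub-block and keeps all $m=1,\dots,\lfloor T\rfloor$. It uses the same four-term expansion as you: your $2s$ and $2(s\pm\Delta)$ coincide mod $2\pi$, up to sign, with $2(\theta+\lambda)$, $4\theta$ and $4\lambda$. It then combines $\sin(x/2)\ge x/\pi$ on $[0,\pi]$ with sinc bounds, where $L=\lfloor T\rfloor+\frac12$:
\begin{itemize}
\item For the two cross terms, $D_M(x)\ge\pi L\inf_t\frac{\sin t}{t}>-0.7L$ for every $x$. This handles both at once, with no case split near $0$ or $\pi$.
\item For the two negative terms, $D_M(x)\le\pi L\sup_{t\ge 7\pi/8}\frac{\sin t}{t}<0.44L$ for $x\in[\frac{\pi}{T},2\pi-\frac{\pi}{T}]$ when $T\ge3$.
\end{itemize}
This yields $\sum_{m=1}^{\lfloor T\rfloor}\mathcal{E}^{(\lambda)}_m(\theta)\ge\frac12(2L-0.88L-0.7L)=0.21L$. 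Hence $\mathcal{E}^{(\lambda)}(\theta)\ge 0.21\,bL/T>0.18\,b$, with no sign-structure sub-blocks needed. Replacing your modulus bound on a sub-block by these Dirichlet-kernel bounds over the full range is the missing ingredient.
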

\begin{proof}
    To proceed we will temporarily distinguish between the following two cases.
    \begin{enumerate}
        \item Let $T\geq 3$.
    $|\theta-\lambda|\geq \frac{\pi}{2T}$ implies $\theta+\lambda \in [\frac{\pi}{2T}, \pi-\frac{\pi}{2T}]$. We can expand
    \begin{equation}
        \mathcal{E}^{(\lambda)}_m(\theta)=1+\frac{1}{2}\left(\cos(4m\theta)+\cos(4m\lambda)\right)-\cos(2m(\theta+\lambda))-\cos(2m(\theta-\lambda))
    \end{equation}
    and use
    \begin{equation}
        \sum_{m=1}^M\cos(mx)=\frac{\sin(x(M+\frac{1}{2}))}{2\sin(x/2)}-\frac{1}{2}
    \end{equation}
    to obtain
    \begin{equation}
        \mathcal{E}^{(\lambda)}(\theta)\geq \frac{b}{2T}\left(2\lfloor T \rfloor+1-D_{\lfloor T \rfloor}(2(\theta+\lambda))-D_{\lfloor T \rfloor}(2(\theta-\lambda))+\frac{1}{2}D_{\lfloor T \rfloor}(4\theta)+\frac{1}{2}D_{\lfloor T \rfloor}(4\lambda) \right),
    \end{equation}
    where 
    \begin{equation}
        D_M(x):=\frac{\sin(x(M+\frac{1}{2}))}{\sin(x/2)}
    \end{equation}
    is the Dirichlet kernel. We need a lower bound on $D_M(x)$ for $x\in [0,2\pi]$ and an upper bound on $D_M(x)$ for $x\in[\frac{\pi}{T}, 2\pi-\frac{\pi}{T}]$. We start by finding the lower bound. Let $L:=M+\frac{1}{2}$ and consider the following cases:
    \begin{enumerate}
        \item $x\in [0,\frac{\pi}{L}]$. Here we have $D_M(x)\geq 0$
        \item $x\in [\frac{\pi}{L}, \pi]$. Here we have $0\leq x/2\leq \pi/2$. It follows
        \begin{equation}
            \frac{1}{\sin(x/2)}\leq \frac{\pi}{x}. 
        \end{equation}
        Hence, we have
        \begin{equation}
            D_M(x)\geq \min\left(0, \pi \frac{\sin(xL)}{x} \right).
        \end{equation}
        With 
        \begin{equation}
            \pi \frac{\sin(xL)}{x}=L\pi\frac{\sin(xL)}{xL}\geq L\pi \inf_{\pi\leq y \leq L\pi}\frac{\sin y}{y}\geq L\pi \inf\frac{\sin y}{y}> -0.7L, 
        \end{equation}
        it follows $D_M(x)\geq -0.7L$.
        \item $x\in [\pi, 2\pi]$. $\sin(x)$ is symmetric around $x=\pi/2+k\pi$ for $k\in\mathbb{Z}$. Hence, both $\sin(x/2)$ and $\sin(Lx)=\sin((M+1/2)x)$ are symmetric around $\pi$. Therefore, also $D_M(x)$ is symmetric around $\pi$ and we obtain the same bounds as in the other two cases.
    \end{enumerate}
    Consequently, we have $D_M(x)\geq -0.7L$ for all $x\in[0,2\pi]$ (and also all $x\in\mathbb{R}$). We now derive the upper bound. Because of the symmetry of $D_M(x)$ around $\pi$, it is sufficient to find an upper bound for $x\in [\frac{\pi}{T}, \pi]$. In this region we can write
    \begin{equation}
        D_M(x)\leq \max\left(0, \pi\frac{\sin(xL)}{x} \right).
    \end{equation}
    With
    \begin{equation}
        \pi\frac{\sin(xL)}{x}\leq L\pi\sup_{\frac{\pi L}{T}\leq y}\frac{\sin y}{y} .
    \end{equation}
    For $L=\lfloor T \rfloor +1/2$ and $T\geq 3$, we have $\frac{\pi L}{T}=\pi \frac{\lfloor T\rfloor+\frac{1}{2}}{T}\geq \frac{7}{8}\pi$ and 
    \begin{equation}
        \pi\sup_{\frac{7\pi}{8}\leq y}\frac{\sin y}{y}<0.44.
    \end{equation}
    Hence, $D_M(x)\leq 0.44L=0.44(\frac{1}{2}+\lfloor T \rfloor)$. Putting everything together, for $T\geq 3$, it follows
    \begin{align}
        \mathcal{E}^{(\lambda)}(\theta)&\geq \frac{b}{2T}(1-0.44-0.35)\left(2\lfloor T \rfloor+1\right)\\
        &=0.21b\frac{\lfloor T\rfloor +\frac{1}{2}}{T}> 0.18b.
    \end{align}
    \item Let $T\in [1,3]$. As before, we can use that $|\theta-\lambda|\geq \frac{\pi}{2T}$ implies $\theta+\lambda \in [\frac{\pi}{2T}, \pi-\frac{\pi}{2T}]$ to find
    \begin{align}
        \mathcal{E}^{(\lambda)}(\theta)&\geq \frac{4b}{T}\sum^{\lfloor T \rfloor}_{m=1}\sin^2(m(\theta+\lambda))\sin^2(m(\theta-\lambda))\\
        &\geq \frac{4b}{T}\sin^2(\theta+\lambda)\sin^2(\theta-\lambda)\\
        &\geq \frac{4b}{T^5}\geq \frac{4b}{3^5}=\frac{4}{243}b.
    \end{align}
    \end{enumerate}
    Consequently, for $|\theta-\lambda|\geq \frac{\pi}{2T}$, we have $\mathcal{E}^{(\lambda)}(\theta)\geq \frac{4}{243}b$ for all $T\geq 1$.
\end{proof}

\begin{corollary}
    \label{cor:lower_bound}
    For every $(b,\sigma, C,C', C'', C''')$-admissible window $p$ we have
    \begin{equation}
        \mathcal{E}^{(\lambda)}(\theta)\geq C_2\min\left(1, T^2|\theta-\lambda|^2, T^4|\theta-\lambda|^2\max(\bar \lambda, |\theta-\lambda|)^2 \right),      
    \end{equation}
    where $C_2:=\frac{2b}{1215\pi^4}$ and $\bar\lambda:=\min(\lambda, \frac{\pi}{2}-\lambda)$.
\end{corollary}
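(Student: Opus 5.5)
The corollary follows by splitting on whether $\Delta:=|\theta-\lambda|$ is below or above $\frac{\pi}{2T}$, applying \cref{lemma:lower_bound_near} or \cref{lemma:lower_bound_far} respectively, and checking in each case that the resulting bound dominates $C_2\Psi^2$ with
\begin{equation*}
\Psi^2:=\min\left(1, T^2\Delta^2, T^4\Delta^2\max(\bar\lambda,\Delta)^2\right).
\end{equation*}

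\emph{Near regime, $\Delta\leq \frac{\pi}{2T}$.} \Cref{lemma:lower_bound_near} gives
\begin{equation*}
\mathcal{E}^{(\lambda)}(\theta)\geq C_2 T^2\Delta^2\min\left(1,T^2\max(\bar\lambda,\Delta)^2\right)=C_2\min\left(T^2\Delta^2,\, T^4\Delta^2\max(\bar\lambda,\Delta)^2\right).
\end{equation*}
The right-hand side is at least $C_2\Psi^2$, since adding the entry $1$ to the minimum can only decrease it. This case needs nothing further.

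\emph{Far regime, $\Delta\geq \frac{\pi}{2T}$.} \Cref{lemma:lower_bound_far} gives $\mathcal{E}^{(\lambda)}(\theta)\geq C_3=\frac{4b}{243}$. Since $\Psi^2\leq 1$, it suffices to show $C_3\geq C_2$. We have $C_2=\frac{2b}{1215\pi^4}\approx 1.7\cdot 10^{-5}\,b$, whereas $C_3\approx 1.6\cdot 10^{-2}\,b$. Hence $\mathcal{E}^{(\lambda)}(\theta)\geq C_3\geq C_2\geq C_2\Psi^2$.

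Combining the two regimes gives the claim for all $\theta,\lambda\in[0,\pi/2]$. The only points to verify are the constant comparison $C_3\geq C_2$ and that the two regimes cover all $\theta$, which they do because they overlap at $\Delta=\frac{\pi}{2T}$. I expect no real obstacle.
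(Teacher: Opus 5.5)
Your proposal is correct and matches the paper's proof: the paper likewise combines \cref{lemma:lower_bound_near} and \cref{lemma:lower_bound_far} into $\mathcal{E}^{(\lambda)}(\theta)\geq\min(C_3, C_2T^2|\theta-\lambda|^2\min(1,T^2\max(\bar\lambda,|\theta-\lambda|)^2))$ and concludes using $C_2<C_3$. Your explicit case split on $|\theta-\lambda|$ versus $\frac{\pi}{2T}$, together with the observation that the minimum is at most $1$, just spells out that one-line argument.
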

\begin{proof}
    We can combine \cref{lemma:lower_bound_near} and \cref{lemma:lower_bound_far} to obtain 
    \begin{align}
        \mathcal{E}^{(\lambda)}(\theta)&\geq \min\left(C_3,C_2T^2|\theta-\lambda|^2\min(1, T^2\max(\bar\lambda, |\theta-\lambda|)^2)  \right)\\
        &\geq C_2\min\left(1, T^2|\theta-\lambda|^2, T^4|\theta-\lambda|^2\max(\bar\lambda, |\theta-\lambda|)^2 \right)
    \end{align}
    for all $\theta, \lambda\in [0,\frac{\pi}{2}]$, where we have used that $C_2<C_3$.
\end{proof}
\begin{corollary}
    \label{cor:envelope}
    For every $(b, \sigma,C, C', C'', C''')$-admissible window $p$ we have
    \begin{equation}
        C_2\Psi(|\theta-\lambda|)^2\leq \mathcal{E}^{(\lambda)}(\theta)\leq C_1\Psi(|\theta-\lambda|)^2,
    \end{equation}
    where $\Psi(\Delta):=\min\left(1, T\Delta, T^2\Delta\max(\bar \lambda, \Delta) \right)$ and $\bar\lambda:=\min(\lambda, \frac{\pi}{2}-\lambda)$.
\end{corollary}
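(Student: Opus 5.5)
The two-sided bound follows by combining \cref{lemma:upper_bound} with \cref{cor:lower_bound} and checking that the quantity inside the minimum in both results equals $\Psi(|\theta-\lambda|)^2$. Write $\Delta:=|\theta-\lambda|$ and $\mu:=\max(\bar\lambda,\Delta)$. Both earlier results control $\mathcal{E}^{(\lambda)}(\theta)$ by a constant times
\begin{equation*}
    F(\Delta):=\min\left(1, T^2\Delta^2, T^4\Delta^2\mu^2\right).
\end{equation*}
So all that is needed is the identity $F(\Delta)=\Psi(\Delta)^2$.

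The identity follows because every argument of $\Psi$ is nonnegative. On $[0,\infty)$ the map $x\mapsto x^2$ is increasing, so it commutes with $\min$:
\begin{equation*}
    \min(x_1,x_2,x_3)^2=\min(x_1^2,x_2^2,x_3^2), \qquad x_i\geq 0.
\end{equation*}
Take $x_1=1$, $x_2=T\Delta$ and $x_3=T^2\Delta\mu$. These are nonnegative since $T\geq1$, $\Delta\geq0$ and $\mu\geq0$. Their squares are $1$, $T^2\Delta^2$ and $T^4\Delta^2\mu^2$, which are exactly the three terms of $F(\Delta)$. Hence $\Psi(\Delta)^2=F(\Delta)$.

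Substituting this identity into \cref{lemma:upper_bound} gives $\mathcal{E}^{(\lambda)}(\theta)\leq C_1\Psi(\Delta)^2$. Substituting it into \cref{cor:lower_bound} gives $\mathcal{E}^{(\lambda)}(\theta)\geq C_2\Psi(\Delta)^2$, valid for all $\theta,\lambda\in[0,\pi/2]$. This is the claimed sandwich. There is no real obstacle: the work is already done in the earlier lemmas, which use the same $\bar\lambda$ and $T$. The only point to check is the nonnegativity needed to move the square inside the minimum, and it holds.
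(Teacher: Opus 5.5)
Your proposal is correct and matches the paper's proof, which simply calls the statement an immediate consequence of \cref{lemma:upper_bound} and \cref{cor:lower_bound}. The paper leaves one step implicit, and you supply it: on nonnegative arguments, squaring commutes with the minimum, which gives $\Psi(\Delta)^2=\min\left(1,T^2\Delta^2,T^4\Delta^2\max(\bar\lambda,\Delta)^2\right)$.
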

\begin{proof}
    Immediate consequence of \cref{lemma:upper_bound} and \cref{cor:lower_bound}.
\end{proof}

\begin{lemma}[concentration of empirical fluctuation]
    \label{lemma:concentration}
    Let $p$ be a $(b,\sigma,C, C', C'', C''')$-admissible window, $\lambda\in [0,\pi/2]$, $N\in\mathbb{N}_{\geq 1}$, $\delta\in(0,1)$, $K\in\mathbb{N}_{\geq 1}$, and $G_K$ the uniform grid of \cref{def:uniform_grid}. Then, with probability at least $1-\delta$, we have
    \begin{equation}
        |D_N^{(\lambda)}(\theta)|\leq C_{8}\max\left(r\Psi(|\theta-\lambda|), \sqrt{r}\Omega(|\theta-\lambda|) \right)
    \end{equation}
    for all $\theta\in G_K$, where $r=\frac{\log(2K/\delta)}{N}$, $C_8:=\sqrt{32}\max\left(24\sigma^2, 4C, 36C',12\sqrt{2C''}, 18\sqrt{2C'''}\right)$, and 
    \begin{equation}
        \Psi(\Delta):=\min\left(1, T\Delta, T^2\Delta\max(\bar\lambda, \Delta) \right),\qquad\Omega(\Delta):=\min\left(1, \min\left( 1,T^2\max(\bar\lambda, \Delta)^2\right)\max\left[T\min(\bar\lambda, \Delta), T^2\Delta^2\right] \right),
    \end{equation}
    where $\bar\lambda:=\min(\lambda, \frac{\pi}{2}-\lambda)$
\end{lemma}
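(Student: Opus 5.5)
The plan is to apply Bernstein's inequality to the i.i.d.\ summands
\[
Y_i(\theta):=(Z_i-\cos(2m_i\theta))^2-(Z_i-\cos(2m_i\lambda))^2-\mathcal{E}^{(\lambda)}(\theta),
\]
so that $D_N^{(\lambda)}(\theta)=\frac1N\sum_i Y_i(\theta)$ with $\mathbb{E}[Y_i]=0$. We then take a union bound over the $K$ grid points; the two-sided tail gives $2K$ events, hence $r=\log(2K/\delta)/N$. Bernstein yields, with probability at least $1-\delta/K$ per point,
\[
|D_N|\le \sqrt{2\,\mathrm{Var}[Y_i]\,r}+\tfrac{2}{3}\,R\,r,
\]
where $R$ bounds $|Y_i|$. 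It therefore suffices to prove, with $\Delta:=|\theta-\lambda|$, the two deterministic estimates
\[
|Y_i|\lesssim \Psi(\Delta),\qquad \mathrm{Var}[Y_i]\lesssim \Omega(\Delta)^2,
\]
with constants built from $\sigma^2$ and $C,\dots,C'''$. Taking $C_8$ as the maximum of these constants, times $\sqrt{32}$ to absorb the Bernstein factors, then gives the claim.

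\textbf{Range bound.} Expanding with $Z_i^2=1$,
\[
(Z-\cos 2m\theta)^2-(Z-\cos 2m\lambda)^2=(\cos 2m\lambda-\cos 2m\theta)\bigl(2Z-\cos 2m\theta-\cos 2m\lambda\bigr).
\]
The first factor equals $2\sin(ms)\sin(m\Delta)$ up to sign, with $s$ as in the proof of \cref{lemma:upper_bound}. It is therefore bounded by $2\min(1,m\Delta,3m^2\Delta\max(\bar\lambda,\Delta))$ via \eqref{eq:barlambdaidenity}, and the second factor is at most $4$. Using $|m|\le M\le\sigma T$ pointwise (this is where $\sigma^2$ enters), the summand is $O(\sigma^2\Psi)$. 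The subtracted mean $\mathcal{E}^{(\lambda)}(\theta)\le C_1\Psi^2\le C_1\Psi$ is of the same order.

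\textbf{Variance bound.} This is the main obstacle, since it must capture the cancellation behind \eqref{eq:snr}. Decompose
\[
2Z-\cos 2m\theta-\cos 2m\lambda=2(Z-\cos 2m\lambda)+(\cos 2m\lambda-\cos 2m\theta).
\]
By the law of total variance, $\mathrm{Var}[Y]\le \mathbb{E}[\mathrm{Var}(Y\mid m)]+\mathbb{E}[\mathcal{E}_m^2]$. The second term is at most $\mathbb{E}_m$ of $\min(1,m^4\Delta^2 s^2)^2$-type quantities and is controlled via (W2) with the eighth and sixth moments. The conditional variance is
\[
4(\cos 2m\lambda-\cos 2m\theta)^2\,\mathrm{Var}[Z_m]=16\sin^2(ms)\sin^2(m\Delta)\sin^2(2m\lambda).
\]
The key is the extra factor $\sin^2(2m\lambda)\le 4\min(1,m^2\bar\lambda^2)$, which vanishes at the boundary. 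Bounding this product as
\[
\min(1,m^2\max(\bar\lambda,\Delta)^2)\cdot\min(1,m^2\Delta^2)\cdot\min(1,m^2\bar\lambda^2)
\]
and distinguishing $\bar\lambda\le\Delta$ from $\bar\lambda>\Delta$ produces the structure $\min(1,T^2\max(\bar\lambda,\Delta)^2)\max[T\min(\bar\lambda,\Delta),T^2\Delta^2]$, squared. For instance, with $\bar\lambda\le\Delta$ we get $m^6\Delta^4\bar\lambda^2\le m^6\Delta^2\max(\dots)^2\min(\dots)^2$. After taking expectations with (W2), where the moments $\mathbb{E}[m^4]$, $\mathbb{E}[m^6]$ and $\mathbb{E}[m^8]$ produce $C'$, $\sqrt{C''}$ and $\sqrt{C'''}$ after square roots, we obtain $\sqrt{\mathrm{Var}}\lesssim\Omega(\Delta)$.

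The fiddly part is matching each regime to exactly one branch of the nested min/max in $\Omega$. The plan is to verify the inequality separately in the three regions $\Delta\ge 1/T$, $\bar\lambda\le\Delta<1/T$, and $\Delta<\bar\lambda$, capping each bound by the trivial estimate $\mathrm{Var}\le\mathbb{E}[Y^2]\lesssim 1$. Combining the variance and range bounds in Bernstein and taking the maximum of the two terms completes the proof.
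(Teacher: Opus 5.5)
Your proposal is correct and follows the paper's proof essentially step for step: Bernstein's inequality at each grid point with a union bound yielding $r=\log(2K/\delta)/N$, a range bound of order $C_7\Psi$ obtained from $|m|\le\sigma T$ together with $\mathcal{E}^{(\lambda)}\le C_1\Psi^2\le C_1\Psi$, and a variance bound of order $C_6\Omega^2$ via the law of total variance, where the factor $\sin^2(2m\lambda)$ in the conditional variance supplies the boundary cancellation. The differences are cosmetic or minor. The paper avoids your three-region case split by writing everything in terms of $u=\min(\bar\lambda,\Delta)$ and $v=\max(\bar\lambda,\Delta)$, using $uv=\Delta\bar\lambda$ and $\max(\min(1,a),\min(1,b))=\min(1,\max(a,b))$. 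In addition, the term $\mathbb{E}[(\mathcal{E}^{(\lambda)}_m)^2]$ is controlled by the fourth and eighth moments, not the sixth and eighth as you wrote; the sixth moment enters only the conditional-variance term.
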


\begin{proof}
    For $i\in\{1,...,N\}$, define $X_i(\theta):=(Z_{i}-\cos(2m_i\theta))^2-(Z_{i}-\cos(2m_i\lambda))^2-\mathcal{E}^{(\lambda)}(\theta)$, such that $D_N^{(\lambda)}(\theta)=\frac{1}{N}\sum_{i=1}^NX_i(\theta)$. For now we drop the index $i$ and treat $X(\theta)$ as a single random variable. Using $Z_{m}^2=1$, we can write
    \begin{align}
        X(\theta)&=2Z_{m}(\cos(2m\lambda)-\cos(2m\theta))+\cos^2(2m\theta)-\cos^2(2m\lambda)-\mathcal{E}^{(\lambda)}(\theta)\\
        &=2(\cos(2m\lambda)-Z_{m})(\cos(2m\theta)-\cos(2m\lambda))-\mathcal{E}^{(\lambda)}(\theta)+\mathcal{E}_{m}^{(\lambda)}(\theta),
    \end{align}
    to see that $\mathbb{E}[X(\theta)]=0$. Using the law of total variance, it follows
    \begin{align}
        \label{eq:var}
        \mathrm{Var}\left[X(\theta) \right]&=\mathbb{E}_{m\sim p}\left[\mathrm{Var}\left[X(\theta)\mid m \right] \right]+\mathrm{Var}_{m\sim p}\left[\mathbb{E}\left[X(\theta)\mid m \right] \right]\\
        &=\mathbb{E}_{m\sim p}\left[\mathrm{Var}_{Z_{m}\sim p_{m}^{\lambda}}\left[X(\theta) \right] \right]+\mathrm{Var}_{m\sim p}\left[\mathbb{E}_{Z_{m}\sim p_{m}^{\lambda}}\left[X(\theta)\right] \right],
    \end{align}
    where, using (R1), $\mathbb{E}_{Z_{m}\sim p_{m}^{\lambda}}\left[X(\theta)\right]=-\mathcal{E}(\theta)+\mathcal{E}_m(\theta)$ and $\mathrm{Var}_{Z_{m}\sim p_{m}^{\lambda}}\left[X(\theta) \right]=4\mathcal{E}^{(\lambda)}_m(\theta)\mathrm{Var}_{Z_{m}\sim p_{m}^{\lambda}}\left[Z_{m} \right]=4\mathcal{E}^{(\lambda)}_m(\theta)\sin^2(2m\lambda)$. The first term on the RHS of \cref{eq:var} corresponds to the measurement noise of the binary variable $Z_m$, while the second term captures the additional variance induced by randomly sampling $m$. As before (see \eqref{eq:upper_bound_per_term_loss}) we obtain
    \begin{equation}
        \mathcal{E}^{(\lambda)}_m(\theta)\leq 4\min\left(1, m^2|\theta-\lambda|^2,  9m^4|\theta-\lambda|^2v^2 \right),
    \end{equation}
    where we have defined $v:=\max(\bar \lambda, |\theta-\lambda|)$ and $\sin^2(2m\lambda)=\sin^2(2m\bar\lambda)\leq \min(1, 4m^2\bar{\lambda}^2)$. It follows
    \begin{equation}
        \mathrm{Var}_{Z_{m}\sim p_{m}^{\lambda}}\left[X(\theta)\right]\leq 16\min\left(1, m^2|\theta-\lambda|^2, 9m^4|\theta-\lambda|^2v^2, 4m^2\bar\lambda^2, 4m^4|\theta-\lambda|^2\bar\lambda^2, 36m^6|\theta-\lambda|^2v^2\bar\lambda^2 \right).
    \end{equation}
    Notably, we always have $4m^4\bar\lambda^2\leq 9m^4v^2$, which allows us to drop the $9m^4|\theta-\lambda|^2v^2$ term. Using (W2), it follows
    \begin{align}
        \mathbb{E}_{m\sim p}\left[\mathrm{Var}_{Z_{m}\sim p_{m}^{\lambda}}\left[X(\theta) \right] \right]&\leq 16\min\left(1, CT^2\min(|\theta-\lambda|, 2\bar\lambda)^2, 4C'T^4|\theta-\lambda|^2\bar\lambda^2, 36C''T^6|\theta-\lambda|^2v^2\bar\lambda^2\right)\\
        &\leq C_4\min\left(1, T^2\min(|\theta-\lambda|, \bar\lambda)^2, T^4|\theta-\lambda|^2\bar\lambda^2, T^6|\theta-\lambda|^2v^2\bar\lambda^2 \right),
    \end{align}
    where $C_4:=16\max(1, 4C, 4C', 36C'')$. Notably, we can also drop the $T^4|\theta-\lambda|^2\bar\lambda^2$ term. To see this notice that $\min(a,b)\max(a,b)=ab$ for all $a,b\in \mathbb{R}$. Define $u:=\min(|\theta-\lambda|,\bar\lambda)$. Then $|\theta-\lambda|^2\bar\lambda^2=v^2u^2$. Therefore, we have
    \begin{align}
        \min\left(T^2u^2, T^4|\theta-\lambda|^2\bar{\lambda}^2, T^6|\theta-\lambda|^2v^2\bar\lambda^2 \right)=&\min\left( T^2u^2, T^4u^2v^2, T^6u^2v^4\right)\\
        =&T^2u^2\min\left(1, T^2v^2, T^4v^4 \right).
    \end{align}
    If $Tv\geq 1$, the constant term is the smallest. If $Tv<1$, the quartic term is the smallest. Hence, in neither case we need the second term so that we can drop it. Furthermore, (W2) gives us
    \begin{align}
        \mathrm{Var}_{m\sim p}\left[\mathbb{E}_{Z_{m}\sim p_{m}^{\lambda}}\left[X(\theta)\right] \right]&\leq \mathbb{E}_{m\sim p}[(\mathcal{E}^{(\lambda)}_m)^2]\\
        &\leq 16\min\left(1, C'T^4|\theta-\lambda|^4, 81C'''T^8|\theta-\lambda|^4v^4 \right)\\
        &\leq C_5\min\left(1, T^4|\theta-\lambda|^4, T^8|\theta-\lambda|^4v^4 \right),
    \end{align}
    where $C_5:=16\max(1, C', 81C''')$. For notational convenience set $\Delta:=|\theta-\lambda|$ and $u:=\min(\Delta, \bar\lambda)$. As before, we will use that $uv=\Delta\bar\lambda$.
    With $C_6:=2\max(C_4,C_5)=32\max(1,4C, 4C', 36C'', 81C''')$, it follows
    \begin{align}
        \mathrm{Var}\left[X(\theta) \right]&\leq 2\max\left(\mathbb{E}_{m\sim p}\left[\mathrm{Var}_{Z_{m}\sim p_{m}^{\lambda}}\left[X(\theta) \right] \right], \mathrm{Var}_{m\sim p}\left[\mathbb{E}_{Z_{m}\sim p_{m}^{\lambda}}\left[X(\theta)\right] \right] \right)\\
        &\leq C_6\max\left[\min\left(1, T^2u^2, T^6v^4u^2 \right),  \min\left(1, T^4\Delta^4, T^8\Delta^4v^4 \right)\right]\\
        &= C_6\min\left(1,\max \left[T^2u^2\min\left(1,T^4v^4 \right),T^4\Delta^4\min\left(1, T^4v^4 \right) \right]\right)\\
        &=C_6\min\left(1, \min\left( 1,T^4v^4\right)\max\left[T^2u^2, T^4\Delta^4\right] \right)=:V.
    \end{align}
    Moreover, we find
    \begin{align}
        |X(\theta)|&\leq \left|\left(-2Z_m+(\cos(2m\lambda)+\cos(2m\theta) )\right)\left(\cos(2m\theta)-\cos(2m\lambda) \right)\right|+\mathcal{E}^{(\lambda)}(\theta)\\
        &\leq 8\left|\sin(m(\theta-\lambda))\sin(m(\theta+\lambda)) \right|+\mathcal{E}^{(\lambda)}(\theta)\\
        &\leq 8\min\left(1,m|\theta-\lambda|, 3m^2|\theta-\lambda|v \right)+C_1\min\left(1, T^2|\theta-\lambda|^2, T^4|\theta-\lambda|^2v^2\right)\\
        &\leq C_7\min\left(1,T|\theta-\lambda|, T^2|\theta-\lambda|v \right)=:B,
    \end{align}
    where we have used \eqref{eq:reparametrization}, \eqref{eq:barlambdaidenity}, and $C_7:=2\max(C_1, 24\sigma^2)$. For any fixed $\theta\in G_K$, Bernstein's inequality yields, for every $\eta>0$,
    \begin{equation}
        \mathrm{Pr}\left(\left|D^{(\lambda)}_N(\theta)\right|\geq \eta \right)\leq 2\exp\left(-\frac{N\eta^2}{2\left(V+\frac{B\eta}{3}\right)}\right),
    \end{equation}
    where $B$ and $V$ depend implicitly on $\theta$. To obtain a bound holding simultaneously for all $K$ grid points, for each $\theta\in G_K$ we choose a threshold $\eta=\eta(\theta)$ such that the right-hand side at most $\delta/K$. Equivalently, it suffices that
    \begin{align}
         2K\exp\left(-\frac{N\eta^2}{2\left(V+\frac{B\eta}{3}\right)}\right)&\leq \delta 
    \end{align}
    Defining $L:=\log(2K/\delta)$, this is equivalent to 
    \begin{equation}
         \eta^2-\frac{2BL}{3N}\eta-\frac{2VL}{N}\geq 0,
    \end{equation}
    and hence implied by 
    \begin{equation}
        \eta \geq \frac{\frac{2BL}{3N}+\sqrt{(\frac{2BL}{3N})^2+4\frac{2VL}{N}}}{2}=\frac{BL}{3N}+\sqrt{(\frac{BL}{3N})^2+\frac{2VL}{N}}:=\eta^*.
    \end{equation}
    Using $\sqrt{x}+\sqrt{y}\geq \sqrt{x+y}$, it follows that 
    \begin{equation}
        \frac{2BL}{3N}+\sqrt{\frac{2VL}{N}}\geq \eta^*.
    \end{equation}
    Therefore, setting $r=L/N$, each grid point satisfies 
    \begin{equation}
        \mathrm{Pr}\left(\left|D^{(\lambda)}_N(\theta)\right|\geq \frac{2}{3}Br+\sqrt{2Vr}\right)\leq \delta/K.
    \end{equation}
    A union bound over all $\theta\in G_K$ then gives, with probability at least $1-\delta$, simultaneously for every $\theta\in G_K$,
    \begin{align}
        |D^{(\lambda)}_N(\theta)|&\leq \frac{2}{3}Br+\sqrt{2Vr}\leq \sqrt{8}\max\left(Br, \sqrt{Vr} \right)=C_8\max(r\Psi(\Delta), \sqrt{r}\Omega(\Delta)),
    \end{align}
    where 
    \begin{align}
    C_8 &:= \sqrt{8}\max\!\left(\sqrt{C_6}, C_7\right) \\
    &= \sqrt{8}\max\left(4\sqrt2\max(1, 2\sqrt{C}, 2\sqrt{C'}, 6\sqrt{C''}, 9\sqrt{C'''}),
        2\max(4, 4C, 36C', 24\sigma^2)\right) \\
    &= \sqrt{32}\max\left(24\sigma^2, \sqrt{32C}, 4C, \sqrt{32C'}, 36C', 12\sqrt{2C''}, 18\sqrt{2C'''}\right)\\
    &= \sqrt{32}\max\left(24\sigma^2, 4C, 36C',12\sqrt{2C''}, 18\sqrt{2C'''}\right),
    \end{align}
    where we have used $\sigma^2\geq 1$ to absorb the constant entries, together with $\max(24\sigma^2, 4C)\geq \sqrt{32C}$ and $\max(24\sigma^2, 36C')\geq \sqrt{32C'}$. Furthermore, we defined
    \begin{equation}
        \Psi(\Delta):=\min\left(1, T\Delta, T^2\Delta v \right),\qquad\Omega(\Delta):=\min\left(1, \min\left( 1,T^2v^2\right)\max\left[Tu, T^2\Delta^2\right] \right).
    \end{equation}
\end{proof}
\begin{remark}
    \label{remark:variance}
    Define $Y_i(\theta):=X_i(\theta)+\mathcal{E}^{(\lambda)}(\theta)=(Z_i-\cos(2m_i\theta))^2-(Z_i-\cos(2m_i\lambda))^2$ such that $L^{(\lambda)}(\theta)-L^{(\lambda)}(\lambda)=\frac{1}{N}\sum^N_{i=1}Y_i$. Since $Y_i$ and $X_i$ are related by a deterministic additive term, i.e. the optimal loss $\mathcal{E}^{(\lambda)}(\theta)$, both random variables have the same variance and range. For $\Delta\leq 1/T\Leftrightarrow \Delta T\leq 1$ we have 
    \begin{align}
        \Omega(\Delta)&=\min\left( 1,T^2\max(\bar\lambda, \Delta)^2\right)\max\left[T\min(\bar\lambda, \Delta), T^2\Delta^2\right]\leq \min\left(T\Delta,T^3\Delta\max(\bar\lambda, \Delta)^2\right)=\frac{\Psi^2(\Delta)}{T\Delta},
    \end{align}
    implying
    \begin{equation}
        \mathrm{Var}[Y_i]\leq C_6\Omega^2(\Delta)\leq C_6\frac{\Psi^4(\Delta)}{T^2\Delta^2}.
    \end{equation}
    Furthermore, we have
    \begin{equation}
        \max Y_i(\theta)-\min Y_i(\theta)\leq C_7\Psi(\Delta).
    \end{equation}
\end{remark}

\begin{lemma}
    \label{lemma:main}
    Let $M,N\in\mathbb{N}_{\geq 1}$, let $p$ be a $(b,\sigma,C,C',C'',C''')$-admissible window on $\mathcal{W}_M$ with $T=\max(1,M/\sigma)$, and let $\lambda\in[0,\pi/2]$, $\delta\in(0,1)$, and $\epsilon\in(0,1/T]$. Set $A=\frac{3C_8}{8C_2}$ and $C_9=\frac{(1+A)C_8}{C_2A^2-C_1}$, where $C_1$, $C_2$, $C_8$ are defined as in \cref{lemma:upper_bound}, \cref{lemma:lower_bound_near} and \cref{lemma:concentration}, and let $G_K$ be the uniform grid from \cref{def:uniform_grid} with $K=\lceil\frac{A\pi}{2\epsilon}\rceil$. Then, if
    \begin{equation}
    \label{eq:lemma_cond}
    T\sqrt{N}\geq \frac{2AC_9}{\epsilon}\sqrt{\log\left(\frac{2A\pi}{\epsilon\delta}\right)},
    \end{equation}
    any minimizer $\hat{\theta}\in\arg\min_{\theta\in G_K}L^{(\lambda)}_N(\theta)$ satisfies $|\hat{\theta}-\lambda|\leq\epsilon$ with probability at least $1-\delta$.
\end{lemma}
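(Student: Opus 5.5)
The plan is to verify the localization condition \eqref{eq:localization} of \cref{lemma:localization} on the event of \cref{lemma:concentration}, which has probability at least $1-\delta$. With $\kappa=\frac{\pi}{2K}\le\frac{\epsilon}{A}$, \cref{lemma:localization} then gives $|\hat\theta-\lambda|<A\kappa\le\epsilon$. Throughout write $\Delta=|\theta-\lambda|$ and $r=\log(2K/\delta)/N$. Since $K\le \frac{A\pi}{2\epsilon}+1\le\frac{A\pi}{\epsilon}$, we have $\log(2K/\delta)\le\log(\frac{2A\pi}{\epsilon\delta})$. Hence \eqref{eq:lemma_cond} gives
\[
\sqrt r\le \frac{\epsilon T}{2AC_9}\le\frac{T\kappa}{2C_9}\cdot\frac{A\kappa\cdot 2}{\epsilon}\cdots,
\]
and the usable form is $\sqrt r\le \frac{T\epsilon}{2AC_9}$. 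Because $\kappa\ge \frac{\epsilon}{2A}$ (as $\epsilon\le 1/T\le 1$ makes the ceiling at most doubling), this yields $\sqrt r\le T\kappa/C_9$. This is the single $\lambda$-free smallness parameter everything will be compared against. We also note $T\kappa\le T\epsilon/A\le 1$, and $r\le 1$.

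\emph{Near side.} For $\Delta\le\kappa/2$, \cref{cor:envelope} and \cref{lemma:concentration} give
\[
\mathcal{E}+D\le C_1\Psi(\kappa/2)^2+C_8\max\bigl(r\Psi(\kappa/2),\sqrt r\,\Omega(\kappa/2)\bigr),
\]
using monotonicity of $\Psi,\Omega$ in $\Delta$. These functions are nondecreasing in $\Delta$; for $\Omega$ one checks each factor.

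\emph{Far side.} For $\Delta\ge A\kappa$ we need
\[
\mathcal{E}+D\ge C_2\Psi(\Delta)^2-C_8\max\bigl(r\Psi(\Delta),\sqrt r\,\Omega(\Delta)\bigr).
\]
The key tool is the cancellation of \cref{remark:variance}: for $\Delta\le1/T$ we have $\Omega(\Delta)\le\Psi(\Delta)^2/(T\Delta)$. Then $\sqrt r\,\Omega\le \Psi^2\sqrt r/(T\Delta)\le \Psi^2/(AC_9)$. Similarly, $r\Psi\le\Psi^2$-type bounds follow once we show $\Psi(\Delta)\ge T\Delta\cdot\min(1,T\max(\bar\lambda,\Delta))\gtrsim\sqrt r$ at least in the range where this is needed. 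When $\Delta\ge1/T$ we have $\Psi=1$ and $\Omega\le1$, so the noise is at most $C_8\sqrt r\le C_8/C_9$, which is small.

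To compare the far side with the near side, I would use the scaling of $\Psi$. For $\Delta\ge A\kappa\ge 2A\cdot(\kappa/2)$, $\Psi$ is at least linear in its argument, $\Psi(\Delta)\ge \frac{\Delta}{\kappa/2}\Psi(\kappa/2)$, up to $\min$ with $1$, because each branch of $\Psi$ is at least linear in $\Delta$. Hence $C_2\Psi(\Delta)^2\ge C_2A^2\Psi(\kappa/2)^2$ whenever $\Psi(\Delta)<1$. The condition then reduces to
\[
(C_2A^2-C_1)\Psi(\kappa/2)^2>(1+A)C_8\cdot(\text{noise scale}),
\]
which is exactly what $C_9=\frac{(1+A)C_8}{C_2A^2-C_1}$ is engineered for. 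We need $C_2A^2>C_1$, which holds since $A=3C_8/(8C_2)$ and $C_8\ge$ a large multiple of $C_1$. Positivity of the denominator is a constant check I would do first.

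The main obstacle is the $r\Psi$ (Bernstein range) term and the near-side noise at the boundary $\bar\lambda\to0$. There, $\Psi(\kappa/2)\approx T^2\kappa^2$ can be much smaller than $\sqrt r\sim T\kappa$, so one cannot simply bound each noise term by a multiple of $\Psi(\kappa/2)^2$. My first attempt is to bound all noise terms relative to $\Psi(\Delta)^2$ for the far point directly. Using $\Omega\le\Psi^2/(T\Delta)$, I would aim to show $C_8\sqrt r\,\Omega(\Delta)\le \frac{C_8 T\kappa}{C_9 T\Delta}\Psi(\Delta)^2\le\frac{C_8}{AC_9}\Psi(\Delta)^2$. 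For the range term I would use $r\le\sqrt r\,T\kappa/C_9$ together with $\Psi(\Delta)\ge T\Delta\cdot\Psi(\Delta)$, which holds when $T\Delta\le 1$.

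Comparing with the near side, I would bound its noise by the same mechanism at $\Delta=\kappa/2$, giving a bound of order $\frac{C_8}{C_9}\Psi(\kappa/2)^2\cdot\frac{1}{T\kappa}\cdot T\kappa$. I would then check that the factor $(1+A)$ in $C_9$ absorbs the ratio between the near and far noise levels. If the quartic regime breaks the linear scaling argument, the fallback is to split into the cases $T\bar\lambda\ge T\Delta$ and $T\bar\lambda<T\Delta$. In each case $\Psi$ is an explicit monomial in $\Delta$, and the inequality can be checked directly.
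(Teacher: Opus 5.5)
Your architecture matches the paper's. You apply \cref{lemma:localization} on the event of \cref{lemma:concentration} and use $\sqrt r\le T\kappa/C_9$; your derivation via $\kappa\ge\epsilon/(2A)$ is fine. You bound the near side by monotonicity of $\Psi,\Omega$, neutralize the variance term with $\Omega(\Delta)\le\Psi(\Delta)^2/(T\Delta)$, and compare far to near through the linear growth of $\Psi$; the paper's step $v_\kappa\le v_{A\kappa}$ is exactly $\Psi(A\kappa)\ge A\Psi(\kappa)$. The paper organizes this differently. It splits into the cases $\bar\lambda\lessgtr 1/T$ with explicit monomials, shows by differentiation that $l(\Delta)=C_2\Psi^2-C_8\sqrt r\,\Psi^2/(T\Delta)$ is nondecreasing on $[A\kappa,1/T]$, and treats $\Delta\ge 1/T$ separately. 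Your pointwise signal-to-noise route is a legitimate and somewhat cleaner alternative.

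The ``main obstacle'' you flag is not one. The range term needs $\Psi\gtrsim r$, not $\Psi\gtrsim\sqrt r$; the latter is false at the boundary when $T\kappa\ll 1$, so drop that claim. For $\kappa/2\le\Delta\le 1/T$ you have $\Psi(\Delta)\ge (T\Delta)^2\ge\sqrt r\,T\Delta$, because $\sqrt r\le T\kappa/C_9\le T\kappa/2$ (as $C_9\ge 8/3$). Hence $r\Psi(\Delta)\le\frac{\sqrt r}{T\Delta}\Psi(\Delta)^2$, and the total noise is at most $C_8\frac{\sqrt r}{T\Delta}\Psi(\Delta)^2$; this is what the paper's assumption $\sqrt r\le T\kappa/2$ is used for. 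That gives at most $\frac{2C_8}{C_9}\Psi(\kappa/2)^2$ on the near side and at most $\frac{C_8}{AC_9}\Psi(\Delta)^2$ on $[A\kappa,1/T]$, with no case split.

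Where the sketch genuinely fails is at the two places where the fixed constants $A$ and $C_9$ bite.

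First, for $\Delta\ge 1/T$ you bound the noise by $C_8\sqrt r\le C_8/C_9$ and call it small. But $C_8/C_9=\frac{C_2A^2-C_1}{1+A}\approx C_2A\gg C_2$, so your lower bound $C_2-C_8/C_9$ on $L_\mathrm{far}$ is negative. You need the factor you discarded in $T\kappa\le T\epsilon/A\le 1/A$, which gives $C_8\sqrt r\le \frac{C_8}{AC_9}=\frac{C_2A^2-C_1}{A(1+A)}<C_2$. The paper needs precisely $\sqrt r\le\frac{1}{AC_9}$ here.

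Second, the final comparison is asserted rather than computed, and with the factor $A^2$ you wrote it fails. The best near-side bound available at $\kappa/2$ is $(C_1+\frac{2C_8}{C_9})\Psi(\kappa/2)^2$; the $2$ comes from $\Omega(\kappa/2)\le\Psi(\kappa/2)^2/(T\kappa/2)$. Your far-side bound is $(C_2-\frac{C_8}{AC_9})A^2\Psi(\kappa/2)^2$. The resulting requirement $C_2A^2-C_1>(A+2)\frac{C_8}{C_9}=\frac{A+2}{A+1}(C_2A^2-C_1)$ is false.

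The fix is to keep the full factor from your own inequality, $\Psi(\Delta)\ge\frac{\Delta}{\kappa/2}\Psi(\kappa/2)\ge 2A\Psi(\kappa/2)$. It holds on $[A\kappa,1/T]$ because $\Psi(\Delta)/\Delta$ is nondecreasing there, and for $\Delta\ge 1/T$ because $\Psi(\kappa/2)\le T\kappa/2\le\frac{1}{2A}$. Then for every $\Delta\ge A\kappa$ you get $\mathcal{E}+D\ge (C_2-\frac{C_8}{AC_9})\Psi(\Delta)^2\ge 4A^2(C_2-\frac{C_8}{AC_9})\Psi(\kappa/2)^2$, and
\[
4A^2\left(C_2-\frac{C_8}{AC_9}\right)-C_1-\frac{2C_8}{C_9}=3C_1+\frac{2(C_2A^2-C_1)}{1+A}>0 .
\]
This closes the argument, including $\Delta\ge 1/T$, with the strict inequality \cref{lemma:localization} requires. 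Alternatively, bound the near side at $\kappa$ instead of $\kappa/2$ and use $\Psi(A\kappa)\ge A\Psi(\kappa)$; this reproduces the paper's condition $T/\sqrt r\ge C_9/\kappa$ exactly.
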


\begin{proof}
Define
\begin{equation}
    U:=\max_{\theta\in G_K:|\theta-\lambda|\leq \kappa/2}\left(\mathcal{E}^{(\lambda)}(\theta)+D^{(\lambda)}_N(\theta) \right)
\end{equation}
and
\begin{align}
    L:&=\min_{\theta\in G_K:|\theta-\lambda|\geq A\kappa}\left(\mathcal{E}^{(\lambda)}(\theta)+D^{(\lambda)}_N(\theta) \right)\\
    &=\min\left[\min_{\theta\in G_K:A\kappa\leq |\theta-\lambda|\leq 1/T}\left(\mathcal{E}^{(\lambda)}(\theta)+D^{(\lambda)}_N(\theta) \right), \min_{\theta\in G_K:|\theta-\lambda|\geq 1/T}\left(\mathcal{E}^{(\lambda)}(\theta)+D^{(\lambda)}_N(\theta) \right)\right]=:\min\left[L_\mathrm{near}, L_\mathrm{far}\right].
\end{align}
To satisfy \eqref{eq:localization} of \cref{lemma:localization}, we require $U< L_\mathrm{near}$ and $U< L_\mathrm{far}$. Let $\Psi$ and $\Omega$ be defined as in \cref{lemma:concentration}, and assume $0<A\kappa\leq \frac{1}{T}\implies 0<\kappa\leq \frac{1}{T}$ and distinguish the following two cases:
\begin{enumerate}
    \item Let $\bar\lambda\leq \frac{1}{T}$. It follows 
    \begin{equation}
        \Psi(\Delta)=T^2\Delta v_\Delta,\qquad\Omega(\Delta)=T^3v_\Delta^2\max(u_\Delta, T\Delta^2)\leq T^3\Delta v^2_\Delta,
    \end{equation}
    for all $\Delta\leq 1/T$, where $v_\Delta:=\max(\bar\lambda, \Delta)$ and $u_\Delta:=\min(\bar\lambda, \Delta)$. Furthermore, assume that $\max(r, \sqrt{r}Tv_\Delta)=\sqrt{r}Tv_\Delta$ for all $\kappa/2\leq \Delta \leq 1/T$. This is implied by $\sqrt{r}\leq T\kappa/2$. Using \cref{lemma:upper_bound} and \cref{lemma:concentration}, it follows
    \begin{equation}
        U\leq C_1T^4\frac{\kappa^2}{4}v_{\kappa/2}^2+C_8\sqrt{r}T^3\frac{\kappa}{2} v_{\kappa/2}^2<  C_1T^4\kappa^2v_\kappa^2+C_8\sqrt{r}T^3\kappa v_\kappa^2,
    \end{equation}
    where we have used that $\Psi(\Delta)$ and $\Omega(\Delta)$ are non-decreasing, so $\Psi(\Delta)\leq \Psi(\kappa/2)$ and $\Omega(\Delta)\leq \Omega(\kappa/2)$ for all $\Delta\leq \kappa/2$. Furthermore, using \cref{cor:lower_bound} and \cref{lemma:concentration}, we have
    \begin{equation}
        L_\mathrm{near}\geq \min_{\theta\in G_K:A\kappa\leq \Delta\leq 1/T}l(\Delta),\qquad l(\Delta):=C_2T^4\Delta^2v_\Delta^2-C_8\sqrt{r}T^3\Delta v^2_\Delta.
    \end{equation}
    We want $l(\Delta)$ to be non-decreasing on $[A\kappa, 1/T]$. The first term is increasing in $\Delta$, whereas the second term is decreasing in $\Delta$. Hence, for $\Delta\geq\bar\lambda$, we require
    \begin{equation}
        \frac{\partial l(\Delta)}{\partial \Delta}\bigg|_{\Delta\geq \lambda}=4C_2T^4\Delta^3-3C_8\sqrt{r}T^3\Delta^2\geq 0\Longleftrightarrow \Delta\geq \frac{3C_8}{4C_2T}\sqrt{r},
    \end{equation}
    whereas for $\Delta \leq \bar\lambda$, we require
    \begin{equation}
        \frac{\partial l(\Delta)}{\partial \Delta}\bigg|_{\Delta\leq \lambda}=2C_2T^4\Delta\bar\lambda^2-C_8\sqrt{r}T^3\bar\lambda^2\geq 0\Longleftrightarrow \Delta\geq \frac{C_8}{2C_2T}\sqrt{r}.
    \end{equation}
    Since $\sqrt{r}\leq T\kappa/2$, a sufficient condition for both regimes is $\Delta \geq\frac{3C_8}{8C_2}\kappa$. Therefore, we set $A=\frac{3C_8}{8C_2}$. Since $b \leq 1$ and $\sigma \geq 1$, we have $C_2 = \frac{2b}{1215\pi^4} < 1.7\cdot 10^{-5}$ and $C_8 \geq 24\sqrt{32}\,\sigma^2 > 135$, so that $A > 3\cdot 10^{6} \geq 1$. Since we require $A\kappa\leq 1/T$, we must set $\kappa\leq \frac{1}{AT}$ and then require $\sqrt{r}\leq \frac{1}{2A}$. For this choice of $A$, we have
    \begin{equation}
        \min_{\theta\in G_K:A\kappa\leq \Delta\leq 1/T}l(\Delta)\geq l(A\kappa).
    \end{equation}
    Therefore, $U< L_\mathrm{near}$ is implied by
    \begin{align}
        C_1T^4\kappa^2v^2_{\kappa}+C_8\sqrt{r}T^3\kappa v^2_\kappa\leq C_2T^4A^2\kappa^2v_{A\kappa}^2-C_8\sqrt{r}T^3A\kappa v_{A\kappa}^2.
    \end{align}
    Using $v_\kappa\leq v_{A\kappa}$, a sufficient condition is
        \begin{align}
            C_1T^4\kappa^2v^2_{A\kappa}+C_8\sqrt{r}T^3\kappa v^2_{A\kappa}&\leq C_2T^4A^2\kappa^2v_{A\kappa}^2-C_8\sqrt{r}T^3A\kappa v_{A\kappa}^2\\
            (1+A)C_8\sqrt{r}T^3\kappa v^2_{A\kappa}&\leq (C_2A^2-C_1)T^4\kappa^2v_{A\kappa}^2\\
            \frac{T}{\sqrt{r}}&\geq \frac{(1+A)C_8}{C_2A^2-C_1}\frac{1}{\kappa}=:C_9\frac{1}{\kappa}.
        \end{align}
        Notably, we have
        \begin{equation}
            C_2A^2-C_1=\frac{9C_8^2}{64C_2}-C_1\geq\left( \frac{9}{64}\frac{1215\pi^4}{2b}-1\right)C_1>0,
        \end{equation}
        where we have used that $C_8^2\geq C_1$, which is true, because $C_8\geq \max(1, C_1)$, and that $b\leq 1$. Therefore, choosing 
    \begin{equation}
        \frac{T}{\sqrt{r}}\geq C_9\frac{1}{\kappa}
    \end{equation}
    is sufficient to satisfy $U< L_\mathrm{near}$. Notably, $C_9\geq \frac{C_8}{C_2A}\geq \frac{8}{3}>2$. Therefore, this choice also implies the assumption $T/\sqrt{r}\geq \frac{2}{\kappa}$. Additionally, we require $U< L_\mathrm{far}$. For $\Delta\geq 1/T$, we have
    \begin{equation}
        \Psi(\Delta)=1,\qquad \Omega(\Delta) =1.
    \end{equation}
    Hence, using \cref{cor:lower_bound} and \cref{lemma:concentration}, we have 
    \begin{equation}
        L_\mathrm{far}\geq C_2-C_8\sqrt{r},
    \end{equation}
    where we have used that $\sqrt{r}\leq T\kappa/2\leq 1$. Therefore, $U< L_\mathrm{far}$ is implied by
    \begin{equation}
        C_1T^4\kappa^2 v^2_\kappa +C_8\sqrt{r}T^3\kappa v_\kappa^2=T^3\kappa v_\kappa^2(T\kappa C_1+C_8\sqrt{r})\leq C_2-C_8\sqrt{r}.
    \end{equation}
    Since $Tv_\kappa \leq 1$ and $T\kappa \leq \frac{1}{A}$, a sufficient condition is
    \begin{equation}
        \frac{C_1}{A^2}+\frac{C_8\sqrt{r}}{A}\leq C_2-C_8\sqrt{r}\Longleftrightarrow \frac{1}{\sqrt{r}}\geq AC_9.
    \end{equation}
    Notably, since $A\kappa\leq 1/T\Leftrightarrow A\leq \frac{1}{T\kappa}$, this condition is satisfied whenever $T/\sqrt{r}\geq C_9/\kappa$. Therefore, $U< L_\mathrm{far}$ is implied by $U< L_\mathrm{near}$. Hence, choosing $T/\sqrt{r}\geq C_9/\kappa$ is sufficient to satisfy \eqref{eq:localization}.

    \item Let $\bar\lambda\geq \frac{1}{T}$. It follows $v=\bar\lambda$, $u=\Delta$ and thus
    \begin{equation}
        \Psi(\Delta)=T\Delta,\qquad \Omega(\Delta)=T\Delta.
    \end{equation}
    Furthermore, assume that $\sqrt{r}\leq T\kappa/2 $, which implies $\sqrt{r}\geq r$. Using \cref{lemma:upper_bound} and \cref{lemma:concentration}, it follows
    \begin{equation}
        U< C_1T^2\kappa^2 + C_8\sqrt{r}T\kappa,
    \end{equation}
    and, using \cref{cor:lower_bound} and \cref{lemma:concentration}, 
    \begin{equation}
        L_\mathrm{near}\geq \min_{\theta\in G_K:A\kappa\leq \Delta\leq 1/T}l(\Delta),\qquad l(\Delta):=C_2T^2\Delta^2-C_8\sqrt{r}T\Delta.
    \end{equation}
    We want $l(\Delta)$ to be non-decreasing on $[A\kappa, 1/T]$. Therefore, we require
    \begin{equation}
        \frac{\partial l(\Delta)}{\partial \Delta}=2C_2T^2\Delta-C_8\sqrt{r}T\geq 0\Longleftrightarrow \Delta\geq \frac{C_8}{2C_2T}\sqrt{r}.
    \end{equation}
    Since $\sqrt{r}\leq T\kappa/2 $, $\Delta\geq \frac{3C_8}{8C_2}\kappa$ is a sufficient condition. Therefore, we can again set $A=\frac{3C_8}{8C_2}$, and---since we require $A\kappa \leq \frac{1}{T}$---we must set $\kappa\leq\frac{1}{AT}$ and then require $\sqrt{r}\leq \frac{1}{2A}$. We then have 
    \begin{equation}
        \min_{\theta\in G_K:A\kappa\leq \Delta\leq 1/T}l(\Delta)\geq l(A\kappa).
    \end{equation}
    Therefore, $U< L_\mathrm{near}$ is implied by
    \begin{align}
        C_1T^2\kappa^2+C_8\sqrt{r}T\kappa&\leq C_2T^2A^2\kappa^2-C_8\sqrt{r}TA\kappa\\
        (1+A)C_8\sqrt{r}T\kappa &\leq (C_2A^2-C_1)T^2\kappa^2\\
        \frac{T}{\sqrt{r}}&\geq C_9\frac{1}{\kappa}.
    \end{align}
    Therefore, also in this regime, choosing $\frac{T}{\sqrt{r}}\geq C_{9}\frac{1}{\kappa}$ is sufficient to satisfy $U< L_\mathrm{near}$ and implies $\sqrt{r}\leq T\kappa/2$. Additionally, we require $U< L_\mathrm{far}$. As before, also in this regime, we have $L_\mathrm{far}\geq C_2-C_8\sqrt{r}$. Therefore, $U< L_\mathrm{far}$ is implied by
    \begin{equation}
        C_1T^2\kappa^2+C_8\sqrt{r}T\kappa =T\kappa(T\kappa C_1+C_8\sqrt{r})\leq C_2-C_8\sqrt{r}.
    \end{equation} 
    Since $T\kappa \leq \frac{1}{A}$, a sufficient condition is
    \begin{equation}
        \frac{C_1}{A^2}+\frac{C_8\sqrt{r}}{A}\leq C_2-C_8\sqrt{r}\Longleftrightarrow \frac{1}{\sqrt{r}}\geq AC_9,
    \end{equation}
    which again is implied by $\frac{T}{\sqrt{r}}=C_{9}\frac{1}{\kappa}$.
    Therefore, choosing $T/\sqrt{r}\geq C_9/\kappa$ is sufficient to satisfy \eqref{eq:localization}. 
\end{enumerate}
Consequently, choosing $T/\sqrt{r}=C_9/\kappa$ is sufficient to satisfy \eqref{eq:localization} for all $\bar\lambda$ and consequently also for all $\lambda$. Therefore, for any $\epsilon\in (0,1/T]$ and any $\delta\in (0,1)$, choosing $N\in \mathbb{N}_{\geq 1}$ such that 
\begin{equation}
    T\sqrt{N}\geq \frac{2AC_9}{\epsilon}\sqrt{\log\left(\frac{2A\pi}{\epsilon \delta} \right)} \overset{*}{\implies} T\sqrt{N}\geq \frac{2KC_9}{\pi}\sqrt{\log\left(\frac{2K}{\delta} \right)}\Longleftrightarrow \frac{T}{\sqrt{r}}\geq \frac{C_9}{\kappa}
\end{equation}
is sufficient to ensure that any minimizer $\hat{\theta}\in\arg\min_{\theta\in G_K}L^{(\lambda)}_N(\theta)$ satisfies $|\hat{\theta}-\lambda|\leq \epsilon$ with probability at least $1-\delta$. As before, $G_K$ is the uniform grid with $K=\lceil {\frac{A\pi}{2\epsilon}}\rceil \leq \frac{A\pi}{\epsilon}$, which gives $\frac{1}{\kappa}=\frac{2K}{\pi}\leq \frac{2A}{\epsilon}$ and $2K\leq \frac{2A\pi}{\epsilon}$ and therefore justifies ($*$). Furthermore, we have $\kappa=\frac{\pi}{2K}\leq \frac{\epsilon}{A}\implies A\kappa\leq \epsilon\leq 1/T$. Therefore, our previously made assumption $A\kappa\leq 1/T$ was valid. 
\end{proof}

\begin{proof}[Proof of \cref{theorem:main}]
    For any window $p$ and any $N\in \mathbb{N}_{\geq1}$, the first step of \cref{alg:ldae} is to use a quantum computer to collect $N$ i.i.d.\ draws of $(m_i, Z_i)$ to construct the loss function
    \begin{equation}
        L(\theta)=\frac{1}{N}\sum^N_{i=1}\left(Z_i-\cos(2m_i\theta)\right)^2,
    \end{equation}
    which is identical to $L^{(\lambda)}_N(\theta)$ from \cref{def:empirical_loss}, since $Z_i$ is sampled from a distribution equivalent to $p_{m_i}^{(\lambda)}$. Since $\sigma \geq 1$ and $M \geq 1$, we have $T = \max(1, M/\sigma) \leq M$ and therefore $\epsilon \leq 1/M \leq 1/T$. Moreover, by \cref{remark:definitions} every $(b,\sigma)$-admissible window is $(b,\sigma,\sigma^2,\sigma^4,\sigma^6,\sigma^8)$-admissible, so \cref{lemma:main} applies.
    Since $T \geq M/\sigma$, we have $\sigma^2T^2N \geq M^2N$, and hence choosing $N$ such that
    \begin{equation}
        M^2N \;\geq\; \frac{4\sigma^2A^2C_9^2}{\epsilon^2}\log\left(\frac{2A\pi}{\epsilon\delta}\right)
    \end{equation}
    implies \cref{eq:lemma_cond} such that \cref{lemma:main} ensures that the minimizer $\hat{\theta}$, found by \cref{alg:ldae} through minimization of $L(\theta)$ over the uniform grid $G_K$ with $K=\lceil \frac{A\pi}{2\epsilon}\rceil$, satisfies $|\hat{\theta}-\lambda|\leq \epsilon$ with probability at least $1-\delta$. In particular, choosing $N$ such that 
    \begin{equation}
        N=\left\lceil \frac{4\sigma^2A^2C_9^2}{M^2\epsilon^2}\log\left(\frac{2A\pi}{\epsilon \delta} \right)\right\rceil
    \end{equation}
    is sufficient and implies $M^2N\in {\mathcal{O}}\left( \epsilon^{-2}\log (\epsilon^{-1}\delta^{-1})\right)$. Since $A$ and $C_9$ only depend on $b$, $\sigma$, and $C, C', C'', C'''$, which can all be bounded by powers of $\sigma$, we can conclude that the hidden constant factor only depends on $b$ and $\sigma$. Furthermore, with $\hat{\theta}=\lambda+e$ with $|e|\leq \epsilon$, we can convert the guarantee on $\hat{\theta}$ to a guarantee on the estimate returned by \cref{alg:ldae}, i.e. $\hat{a}=\sin^2(\hat{\theta})$ via
    \begin{align}
        |\hat{a}-a|&=|\sin^2(\lambda+e)-\sin^2\lambda|\\&=|\sin(e)\sin(2\lambda+e)|\\&=|\sin(e)\left(\sin(2\lambda)\cos(e)+\cos(2\lambda)\sin(e) \right)|\\
        &\leq |\sin(e)||\sin(2\lambda)\cos(e)|+\sin^2(e)|\cos(2\lambda)|\\
        &\leq \sin(2\lambda)|e|+e^2\\
        &\leq 2\sqrt{a(1-a)}\epsilon+\epsilon^2,
    \end{align}
    where in the last step we have used that 
    \begin{equation}
        \sin(2\lambda)=2\sin\lambda\cos\lambda = 2\sqrt{a}\sqrt{1-a}.
    \end{equation}
\end{proof}

\begin{proof}[Proof of \cref{cor:interpolation}]
    Define the uniform window $u_M:W_M\to[0,1]$ via $u_M(m)=1/(2M+1)$. Since $T=\max(1, M/\sigma)=M$ for $\sigma=1$, and 
    \begin{equation}
        u_M(m)+u_M(-m)=\frac{2}{2M+1}\geq \frac{2}{3M}=\frac{2/3}{T}
    \end{equation}
    for every $M\in\mathbb{N}_{\geq 1}$, the uniform window $u_M$ is $(2/3,1)$-admissible for every $M\in\mathbb{N}_{\geq 1}$. 

    Set $M:=\lfloor \epsilon^{-1+\beta}\rfloor$, which satisfies $M\in \mathbb{N}_{\geq 1}$ since $\epsilon\leq 1$ and $\beta\leq 1$. We then have $M\leq \epsilon^{-1+\beta}\leq \epsilon^{-1}\implies 1/M\geq \epsilon$. By \cref{theorem:main}, applied to $u_M$ with $K = \lceil A\pi/(2\epsilon)\rceil$, there exists a choice of $N$ for which $M^2N\in{\mathcal{O}}(\epsilon^{-2}\log(\epsilon^{-1}\delta^{-1}))$ and for which \cref{alg:ldae} returns, with probability at least $1-\delta$, an estimate $\hat{\theta}$ satisfying $|\hat{\theta}-\lambda|\leq \epsilon$ for every $\lambda\in[0,\pi/2]$. Consequently, $\hat{a}=\sin^2(\hat{\theta})$ satisfies \cref{eq:error_prop} for every $a\in[0,1]$. Since $\frac{1}{2}\epsilon^{-1+\beta}\leq M\leq \epsilon^{-1+\beta}$ (because $\lfloor x \rfloor \geq x/2$ for all $x\geq 1$), this is implied by 
    \begin{equation}
        N\in{\mathcal{O}}(\epsilon^{-2}M^{-2}\log(\epsilon^{-1}\delta^{-1}))={\mathcal{O}}(\epsilon^{-2\beta}\log(\epsilon^{-1}\delta^{-1})).
    \end{equation}
    The query count is upper bounded by $MN$, so \cref{alg:ldae} requires at most
    \begin{equation}
        MN\in{\mathcal{O}}(\epsilon^{-1+\beta}\epsilon^{-2\beta}\log(\epsilon^{-1}\delta^{-1}))={\mathcal{O}}(\epsilon^{-1-\beta}\log(\epsilon^{-1}\delta^{-1}))
    \end{equation}
    queries to $U$ and $U^\dagger$, while the maximum circuit depth is $M\leq \epsilon^{-1+\beta}\in \mathcal{O}(\epsilon^{-1+\beta})$. Finally, every circuit executed by \cref{alg:ldae} consists of powers of $G$ and, for even depths, one additional reflection around $I-2P$, applied to $\ket{\psi}$, followed by a measurement of $I-2P$ or $2\ket{\psi}\bra{\psi}-I$. It therefore requires neither ancilla qubits nor controlled Grover operations.  
\end{proof}

\end{document}